\newtheorem{example}{Example} 
\newcommand{\Prob}{\ensuremath{{\mathds P}}} 
\newcommand{\R}{\ensuremath{{\mathds R}}}
\newcommand{\E}{\ensuremath{{\mathds E}}}
\newcommand{\X}{\ensuremath{{\mathcal X}}}
\newcommand{\Ind}{\ensuremath{{\mathds{1}}}} 
\DeclareMathOperator*{\argmax}{arg\,max}
\newcommand{\distas}[1]{\mathbin{\overset{#1}{\kern\z\sim}}}%
\newsavebox{\mybox}\newsavebox{\mysim}
\newcommand{\distras}[1]{%
  \savebox{\mybox}{\hbox{\kern3pt$\scriptstyle#1$\kern3pt}}%
  \savebox{\mysim}{\hbox{$\sim$}}%
  \mathbin{\overset{#1}{\kern\z@\resizebox{\wd\mybox}{\ht\mysim}{$\sim$}}}%
}
\newcommand{\indep}{\raisebox{0.05em}{\rotatebox[origin=c]{90}{$\models$}}}
\numberwithin{equation}{section}
\theoremstyle{plain}
\newtheorem{proposition}{Proposition}[section]
\newtheorem{corollary}{Corollary}[section]
\newtheorem{lemma}{Lemma}[section]
\newtheorem{definition}{Definition}[section]
\newtheorem{assumption}{Assumption}[section]
\title{How to rank imputation methods?}
\date{}
\author{Jeffrey Näf$^1$, Krystyna Grzesiak$^2$, Erwan Scornet$^3$ \\
$^1$Research Institute for Statistics and Information Science, University of Geneva\\
$^2$Faculty of Mathematics and Computer Science, University of Wroc\l{}aw\\
        $ ^3$Sorbonne Universite and Universite Paris Cite, CNRS,\\
        Laboratoire de Probabilites, Statistique et Modelisation, F-75005 Paris
}
\begin{document}

\maketitle


\begin{abstract}
Imputation is an attractive tool for dealing with the widespread issue of missing values. Consequently, studying and developing imputation methods has been an active field of research over the last decade. Faced with an imputation task and a large number of methods, how does one find the most suitable imputation? Although model selection in different contexts, such as prediction, has been well studied, this question appears not to have received much attention. In this paper, we follow the concept of Imputation Scores (I-Scores) and develop a new, reliable, and easy-to-implement score to rank missing value imputations for a given data set without access to the complete data. In practice, this is usually done by artificially masking observations to compare imputed to observed values using measures such as the Root Mean Squared Error (RMSE). We discuss how this approach of additionally masking observations can be misleading if not done carefully and that it is generally not valid under MAR. We then identify a new missingness assumption and develop a score that combines a sensible masking of observations with proper scoring rules. As such the ranking is geared towards the imputation that best replicates the distribution of the data, allowing to find imputations that are suitable for a range of downstream tasks. We show the propriety of the score and discuss an estimation algorithm involving energy scores. Finally, we show the efficacy of the new score in simulated data examples, as well as a downstream task.
\end{abstract}


\paragraph{Keywords:} Nonparametric imputation, missing at random, pattern-mixture models, distributional prediction, proper scores




\section{Introduction}
Missing values are ubiquitous in data analysis, as values may not be recorded for a variety of reasons. In such cases, one observes masked points according to various patterns of missingness and in addition to the data distribution, the potential missingness mechanism has to be considered. Following the seminal work of \citet{Rubin_Inferenceandmissing}, one usually differentiates between missing completely at random (MCAR), wherein the missingness is completely independent of the data, missing at random (MAR), when the probability of an entry being missing depends only on observed data and missing not at random (MNAR), when the mechanism is neither MCAR nor MAR.

A critical and widely used tool for dealing with missing values is imputation, whereby missing observations are replaced by ``reasonable'' values. As such, developing imputation methods has been an active area of research over the past decades and a wealth of methods has emerged. Most of these methods can be divided into \emph{joint modeling methods} that impute the data using one (implicit or explicit) model and the \emph{fully conditional specification} (FCS) where a different model for each dimension is trained \citep{FCS_Van_Buuren2007, VANBUUREN2018}. Joint modeling approaches may be based on parametric distributions \citep{schafer1997analysis}, or specified in a nonparametric way using tools such as generative adversarial networks (GANs) (\citet{GAIN, directcompetitor1, directcompetitor2}) or variational autoencoders (VAEs)(\citet{MIWAE,VAE1,VAE2, VAE3}). In contrast, in the FCS approach, the imputation is done one variable at a time, based on conditional distributions \citep[see, e.g.,][]{sequentialapproach0noaccess, sequentialapproach1, sequentialapproach2, greatoverview}. The most prominent example of FCS is the multiple imputation by chained equations (MICE) methodology \citep{mice}. 

In order to impute an observation in a certain pattern, a (conditional) distribution must be learned from different patterns. For instance, consider a survey on income and age, where income might be missing and age is always observed. In this case, to impute income, one needs to estimate the conditional distribution of income given age in the pattern in which both are observed to impute in the pattern in which income is missing. A crucial difficulty in this task is that for any missingness mechanism except MCAR, covariate distribution shifts naturally arise when changing from one pattern to another. For example, older participants in the survey might be less likely to reveal their income. Despite this being a classical MAR scenario, there is a shift in distribution from the pattern with observed income to the pattern with missing income. While it is generally acknowledged that FCS can impute MAR data \citep{MICE_Results, directcompetitor0, näf2024goodimputationmarmissingness}, nonparametric joint modeling approaches are usually designed for imputation under MCAR \citep{directcompetitor1, hyperimpute}.


The wealth of different imputation methods and their different properties raises a natural question: how can one choose the best imputation in general? In answering this question, we do not want to focus on a specific downstream task, such as prediction of a target variable, but instead desire to find the imputation method that best replicates the underlying distribution and thus performs well over a range of downstream tasks.  The main approach to identify well-performing general-purpose imputation methods has been to compare the quality of different imputation procedures via pointwise metrics (see below) in benchmark studies with artificially generated missing data. Although these are important considerations, this approach has serious limitations. First, most benchmarking papers use RMSE or MAE between imputed and true data points as a measure of imputation quality; see, e.g., \citet{benchmark_abidin2018performance, benchmark_metabolomics1, awesomebenchmarkingpaper, benchmark_xu2020, benchmark_wang2022deep, benchmark_ge2023simulation, benchmark_alam2023investigation, benchmark_pereira2024imputation, benchmark_deforth2024performance, chilimoniuk2024imputomics, benchmark_joel2024performance} among others. However, measures like RMSE favor methods that impute conditional means, instead of draws from the conditional distribution. Hence, using RMSE as a validation criterion tends to favor imputations that artificially strengthen the dependence between variables, thus leading to severe biases in parameter estimation and uncertainty quantification, as previously pointed out \citep{VANBUUREN2018, RFimputationpaper, ImputationScores,Naturepaper, näf2024goodimputationmarmissingness}. This led to a widespread recommendation of methods such as missForest \citep{stekhoven_missoforest}, which is very successful in imputing conditional means, while  distorting the data distribution. Second, which missingness mechanism is prevalent in real data is generally unknown, and benchmark studies tend to focus on the same simple mechanisms. For instance, even MCAR can lead to missingness with complex dependencies between dimensions. However, benchmarking papers considering MCAR usually generate missingness components as independent Bernoulli with the same parameter. 
Thus, even if some MCAR mechanisms would be realistic, the MCAR mechanisms considered in benchmarking studies appear unlikely to reflect real-life (MCAR) missingness mechanisms. This also holds more generally for M(N)AR mechanisms. 


As such, faced with an imputation task, it would be desirable to choose the imputation method that most closely replicates the data distribution directly for a given dataset. While there are a few papers that go beyond RMSE and MAE in comparing imputation methods when the complete data is available (e.g., in a benchmarking study with artificial missingness), such as \citet{ross,OTimputation, Naturepaper, näf2024goodimputationmarmissingness}, none of these consider the case where the underlying complete data is not available. The first important contribution to solving this problem was made in \cite{ImputationScores}, who define the concept of ``proper'' imputation scores (I-Scores) to rank imputations. Adapting the concept of proper scores in prediction \citep{gneiting}, an I-Score is proper (or meets propriety) if it provably ranks highest an imputation from the true data generating process. \cite{ImputationScores} introduced the concept of I-Scores and developed the DR-I-Score that is shown to be proper under a condition weaker than MCAR, which we refer to as conditional independence missing at random (\ref{CIMAR}, see Section \ref{Sec_Background} for details). The score works by comparing fully observed observations with imputed ones using an estimate of the Kullback-Leibler (KL) divergence obtained with a (Random Forest) classifier. Although comparing observations from different patterns is generally not valid outside of MCAR, the form of the KL divergence allowed to obtain propriety under \ref{CIMAR}.


In this paper, we build on the work of \cite{ImputationScores}, and develop a new I-Score, the ``energy-I-Score''. The score works by comparing (univariate) observed values with values drawn from the imputation distribution (i.e., imputed values) using the energy score \citep{gneiting}. We first discuss the difficulty of scoring under MAR, motivating a missingness assumption under which the new score is shown to be proper. We then discuss an estimation method that uses multiple imputation as a natural way to generate samples to compare with observed points using the energy score. Compared to the DR-I-Score, the method is closer to the concept of prediction scores \citep{gneiting} and naturally scores the ability of an imputation to generate samples. Crucially, it also does not rely on the performance of an additional classifier. Indeed, a poor performance of the classifier can lead to incorrect rankings, as we illustrate in Section \ref{Sec_Simulation}. In contrast, our new score lets the imputation method ``speak for itself'', by scoring the multiple imputations generated for observed columns directly using the energy score. Moreover, the method only scores univariate observation, which is easier to achieve than the comparison of multivariate samples. As an added benefit, methods that cannot produce multiple imputations, such as methods that impute conditional means, are severely punished, which is arguably a desirable outcome. 


The remainder of the article is organized as follows. After introducing our notation in Section \ref{notationsec}, Section \ref{Sec_Background} provides crucial background to develop the new score, including an overview of different MAR mechanism and discusses the difficulty of scoring under MAR. In Section \ref{Sec_Scoring}, we  introduce our new score, prove that it is proper and  presents a way to estimate it in practice. Sections \ref{Sec_Simulation} and \ref{Sec_Empirical} demonstrate the capabilities of the new score with simulated examples and with an empirical application using downstream tasks. The code for replicating the experiments and using the new scoring methodology can be found in \url{https://github.com/KrystynaGrzesiak/ImputationScore}.

\subsection{Notation}\label{notationsec}

Let $(\Omega, \mathcal{A}, \Prob)$ be the underlying probability space on which all random elements are defined. We assume to observe $n$ i.i.d. copies of $(X,M)$, where $M$ takes values in $\{0,1\}^d$ and $X$ is a masked version of a $d$-dimensional random vector $X^*$:
\begin{align*}
    X_j=\begin{cases}
       X_j^*, &\text{ if } M_j=0\\
       \texttt{NA}, &\text{ if } M_j=1.
    \end{cases}
\end{align*}
For example, the observation $(X_1, \texttt{NA},X_3)$ corresponds to the pattern $M=(0,1,0)$. The support of $X^*$ is denoted as $\mathcal{X} \subset \R^d$ and the one of $M$ as $\mathcal{M} \subset \{0,1\}^d$. To define assumptions on the missingness mechanism, we use a notation along the lines of \cite{whatismeant}. For a realization $m$ of the missingness random vector $M$, we let 
\begin{align*}
 o(X,m)& :=(X_j)_{j \in \{1,\ldots,d\}:m_j=0} \quad \textrm{be the observed part of $X$ according to $m$}, \\
 o^c(X,m)&:=(X_j)_{j \in \{1,\ldots,d\}:m_j=1} \quad \textrm{be the missing part of $X$ according to $m$}.
\end{align*}
Throughout, we take $\mathcal{P}$ to be a collection of probability measures on $\R^d \times \{0,1\}^d$ with finite first moment, dominated by some $\sigma$-finite measure $\mu$. Let $P^* \in \mathcal{P}$ be the (unobserved) joint distribution of $(X^*,M)$ with density $p^*(x,m)$, and $P_{X|M}^*$ be the conditional distribution of $X^*\mid M$, with density $p^*(x \mid M=m)$. We also define the support of the observed variable for a pattern $m$, $\X_{\mid m}=\{x \in \X: p^*(o(x,m) \mid M=m) > 0\}$. Finally, $P_{M}$ is the marginal distribution of the mask variable $M$. Given its discrete nature, we introduce the probability mass function $\Prob(M=m)$ for a given pattern $m\in\mathcal{M}$ so that for every measurable set $A\subset \{0,1\}^d$, $P_M[A]=\sum_{m \in A} \Prob(M=m)$. Thus it holds that $p^*(x,m)=p(x^* \mid M=m) \Prob(M=m)$. We take $P$ to be the distribution of $X$ with density $p$. That is $P$ is an extended measure, for a vector $x$ with potential missing values, $p(x)=p^*(o(x,m)\mid M=m) \Prob(M=m)$. We refer to \cite{MNARcontamination} for details.




For a given variable $j$, let $L_{j}=\{m \in \mathcal{M}: m_j=0  \},$ be the set of patterns in which the $j$th variable is observed, such that $\Prob(M_j=1)=\sum_{m \in L_j} \Prob(M=m)$. 
%
We first refine the definition of an imputation distribution in \citet{ImputationScores}: We define $\mathcal{H}_{P} \subset \mathcal{P}$ to be the set of imputation distributions compatible with $P$, that is\footnote{We note that while $h$ and $p$ are densities on $\R^d$, notation is slightly abused by using expressions such as $h(o(x,m)|M=m)$ and $p(o(x,m)|M=m)$, which are densities on $\R^{|\{j:m_j=0 \}|}$.} 
\begin{align}\label{imputationdistset}
      \mathcal{H}_{P}:= \{H \in \mathcal{P}: &\ H \text{ admits density } h \text{ and } h(x, m)=h(x \mid M=m) \Prob(M=m), \nonumber \\
      &\ h(o(x,m)|M=m)=p(o(x,m)|M=m),\text{ for all } m \in \mathcal{M} \}.
\end{align}
In the above definition, both $P$/$p$ and $\Prob_M$ are fixed, but the imputations $h(o^c(x,m)|o(x,m), M=m)$ vary.
For two distributions $P_1,P_2$ on $\R^d$ with densities $p_1$, $p_2$, we define the Kullback-Leibler divergence (KL divergence) as 
     \begin{align*}
         D_{\textrm{\tiny KL}}(P_1 \mid \mid P_2) = \begin{cases}
             \int p(x) \log \left( \frac{p_1(x)}{p_2(x)}\right)d \mu(x), &\text{ if } Q \ll P,\\
             +\infty &\text{ otherwise}.
         \end{cases}
     \end{align*}

Finally, when talking about scoring imputations, we will take $\| \cdot\|_{2}$ to be the Euclidean distance on $\R^d$ and write expectations as $\E_{\substack{X \sim H\\ Y \sim P^*}}[ \| X-Y \|_{2}]$, to clarify over which distributions the expectation is taken.

\section{Background}\label{Sec_Background}

We are interested in creating a reliable ranking method for imputations when the underlying complete data is not available. To this end, we consider the I-Scores framework of \citet{ImputationScores}. 

\begin{definition} [Definition 4.1 in \cite{ImputationScores}]  \label{Iscoredef}
A real-valued function $S_{\textrm{\tiny NA}}(H, P)$
is a proper I-Score if, for any imputation distribution $H\in \mathcal{H}_{P}$,
\begin{equation}\label{scorecondition}
S_{\textrm{\tiny NA}}(H, P) \leq S_{\textrm{\tiny NA}}(P^*, P). 
\end{equation}
It is strictly proper if the inequality is strict for all $H\in \mathcal{H}_{P}$ such that $H \neq P^*$.
\end{definition}

We note that the second argument of the I-Score is the observed data distribution $P$ and not the true distribution $P^*$, thus highlighting that $S_{\textrm{\tiny NA}}(H, P)$ can be computed or at least approximated using observed data. Nonetheless, it should score $P^*$ highest, i.e. \eqref{scorecondition} must hold. This is only possible under assumptions on the missingness mechanism. 


We now present the missingness mechanisms relevant for this paper and then turn to the difficulty of scoring under MAR.

\subsection{MAR mechanisms}

We first discuss several MAR mechanisms, using the framework of pattern mixture models (PMM, \cite{little_patternmixture}). In PMM, one factors the joint distribution $p^*(x,m)$ as $p^*(x \mid M=m) \Prob(M=m)$. This factorization emphasizes that each observation is a masked draw from the distribution $X \mid M=m$ and allows for definitions of MAR that are more suitable for analyzing imputations. We refer to \citet{näf2024goodimputationmarmissingness} and the references therein for further details.

First, the usual MAR condition can be written in PMM notation  \citep[see, e.g.,][]{ourresult, näf2024goodimputationmarmissingness}. 

\begin{definition}[PMM-MAR]
    The missingness mechanism is missing at random (MAR) if, for all $m \in \mathcal{M}$ and $x \in \X_{\mid m} $, 
\begin{align}
&p^*(o^c(x ,m ) \mid o(x ,m ), M =m )= p^*(o^c(x ,m )\mid o(x ,m )). \tag{PMM-MAR}\label{PMMMAR}  
\end{align}
\end{definition}

This definition of MAR highlights the difficulty of imputation. In particular, \ref{PMMMAR} does not constrain $p^*(o^c(x ,m ) \mid o(x ,m ), M =m' )$ for any pattern $m' \neq m$. However, for imputation, one needs to learn $p^*(o^c(x ,m ) \mid o(x ,m )) $ from other patterns, such as $m=0_d$, where $0_d$ is the $d$-dimensional zero vector. As such, simpler conditions were also studied.

\begin{definition}
Assume $0_d \in \mathcal{M}$. The missingness mechanism is Extended Missing At Random (EMAR), if, for all $m \in \mathcal{M}$, for all $m' \in \{0_d, m\}$, and for all $x \in \X_{\mid m} \cap \X_{\mid 0}$, we have
    \begin{align} 
    &p^*(o^c(x ,m ) \mid o(x ,m ), M =m' )= p^*(o^c(x ,m )\mid o(x ,m ))\tag{EMAR} \label{EMAR}.
\end{align}
\end{definition}

\begin{definition}
    The missingness mechanism is conditionally independent MAR (CIMAR) if, for all $m, m' \in \mathcal{M}$ and $x \in \X_{\mid m'} $, we have
    \begin{align}
&p^*(o^c(x ,m ) \mid o(x ,m ), M =m' )= p^*(o^c(x ,m )\mid o(x ,m )) \tag{CIMAR} \label{CIMAR}.  
\end{align}
\end{definition}
We may equivalently write \ref{EMAR} (respectively \ref{CIMAR}) as
\begin{align*}
    p^*(o^c(x ,m ) \mid o(x ,m ), M =m )=p^*(o^c(x ,m ) \mid o(x ,m ), M =m' ),
\end{align*}
for $m'=0_d$ (respectively for all $m' \in \mathcal{M}$). Thus, these conditions ensure that the imputation distribution $p^*(o^c(x ,m )\mid o(x ,m ))$ can be learned from the fully observed pattern (\ref{EMAR}) or any other pattern (\ref{CIMAR}). If $0_d \in \mathcal{M}$, \ref{CIMAR} implies \ref{EMAR}. Finally, if \ref{PMMMAR} does not hold, the missingness mechanism is referred to as missing not at random (MNAR).

Note that all MAR definitions introduced so far allow for arbitrary distribution shifts in observed variables in general. For instance, two variables $(X_1^*, X_2^*)$ can have bounded and disjoint support in two different patterns $m_1=(0,0)$ and $m_2=(0,1)$. As long as the relationship and thus the conditional distribution $P^*_{X_1 \mid X_2}$ is the same in both patterns, \ref{CIMAR} and, in particular, \ref{PMMMAR} are satisfied. To avoid problems associated with extrapolation, we take the following assumption throughout.

\begin{assumption}\label{overlap}
    For all $m, m' \in \mathcal{M}$, $\X_{\mid m}=\X_{\mid m'}$.
\end{assumption}
This assumption still allows for various distribution shifts in the observed variables, as shown in and Example in Section \ref{Sec_Gaussmixmodel}.
We now introduce Example 4 in \citet{näf2024goodimputationmarmissingness} which serves as the leading example throughout the paper. The example was originally introduced to demonstrate that \ref{EMAR} and \ref{CIMAR} are strictly stronger assumptions than \ref{PMMMAR}.

\begin{example}\label{Example1}

Consider
\begin{align*}
\mathcal{M}= \{ m_1, m_2, m_3 \}=\left\{\begin{pmatrix} 0 & 0 & 0\end{pmatrix}, \begin{pmatrix}0 & 1 & 0\end{pmatrix}, \begin{pmatrix} 1 & 0 & 0 \end{pmatrix} \right \},
\end{align*}
and $(X^*_1, X^*_2, X^*_3)$ independently uniformly distributed on $[0,1]$ (later we will also consider dependence between the three variables). We further specify that 
\begin{align*}
    \Prob(M=m_1 \mid x) &= \Prob(M=m_1 \mid x_1) = x_1/3\\
    \Prob(M=m_2 \mid x) &= \Prob(M=m_2 \mid x_1) = 2/3-x_1/3\\
    \Prob(M=m_3 \mid x) &= \Prob(M=m_3) = 1/3.
\end{align*}

It can be checked that 
the MAR condition \ref{PMMMAR} holds. In particular, for variable $x_1$ in pattern $m_3$, it holds that
    \begin{align*}
        p^*(x_1 \mid x_2, x_3, M=m_3) =  p^*(x_1 \mid x_2, x_3).
    \end{align*}
    However, if we consider $x_1$ given $(x_2,x_3)$ in the first pattern, we have:
    \begin{align*}
        p^*(x_1 \mid x_2, x_3, M=m_1)&= \frac{\Prob(M=m_1 \mid x_1, x_2, x_3)}{\Prob(M=m_1 \mid x_2,x_3)} p^*(x_1 \mid x_2, x_3)\\
        &=2x_1 p^*(x_1 \mid x_2,x_3),
    \end{align*}
    showing that both \ref{CIMAR} and \ref{EMAR} do not hold. Indeed, \ref{PMMMAR} allows for a change in the conditional distributions over different patterns and requires only that the distribution $X^*_1 \mid X^*_2, X^*_3$ in pattern $m_3$ corresponds to the unconditional one. Thus the ideal imputation for both $X_1$ and $X_2$ ($P^*$) simply consists of drawing from the uniform distribution. In the following, we would like to find a proper I-Score that assigns the highest score to this imputation.
\end{example}

\subsection{The difficulty of scoring under MAR}\label{Sec_Scoringdifficulty}

The notion of I-Scores was motivated by proper scoring rules in prediction (see e.g, \citet{Gneiting2008} for a thorough treatment). In particular, we will focus here on the energy score. Let $P_1$ be a predictive distribution on $\R^d$ and $y$ be a test point drawn from a distribution $P_2$ on $\R^d$. Then, the energy score
    \begin{align*}
        es(P_1,y) = \frac{1}{2} \E_{\substack{X \sim P_1\\X' \sim P_1}}[| X-X' |]- \E_{X \sim P_1}[ | X - y |],
    \end{align*}
is a strictly proper score, in the sense that, $\E_{Y \sim P_2}[  es(P_1,Y)] < \E_{Y \sim P_1}[  es(P_1,Y)]$, for any $P_2 \neq P_1$. We use the energy score below to construct proper I-Scores.

To illustrate the difficulties of constructing proper I-scores to evaluate the quality of imputations,  let us consider the following simple example. Let $\mathcal{M}=\{m_1,m_2\}=\{(0,0), (1,0)\}$ ($X_1$ missing in the second pattern). With the assumption of equal support (\Cref{overlap}), \ref{PMMMAR} holds if for all $x \in \X$,
$$p^*(x_1 \mid x_2, M=m_1)=p^*(x_1 \mid x_2)=p^*(x_1 \mid x_2, M=m_2).$$
In this example, \ref{PMMMAR} is equivalent to \ref{CIMAR}, though this is not true in general, as Example \ref{Example1} shows. Nonetheless, the joint distribution of $(X_1^*,X_2^*)$ can still shift from pattern $m_1$ to pattern $m_2$. Assume we know the true $P^*$, then the equation above tells us that one can impute $X_1$ with the distribution $P^*_{X_1 \mid X_2}$. In order to evaluate the quality of this imputation, one often compares the imputed values in pattern $m_2$ to the observed values in pattern $m_1$. However, due to possible shifts of the marginal $X_2^*$ between these two patterns, the distribution of the imputed values and that of the observed values in pattern $m_1$ may differ. With such an evaluation process, the correct imputation $P^*_{X_1 \mid X_2}$ appears to be of poor quality. In other words, a score constructed with such an approach (comparing imputed values in a pattern with observed values in another pattern) would certainly be improper. 

Instead of comparing these two distributions, one could  compare the conditional distributions $P^*_{X_1 \mid X_2, M=m_1}$ and $P^*_{X_1 \mid X_2, M=m_2}$. This was the approach taken with the DR-I-Score in \citet{ImputationScores}. Alternatively, one may artificially mask some $X_1$ in pattern $m_1$, impute with $H \in \mathcal{P}$, and compare the resulting imputation with the original values of $X_1$.
This is the approach studied in this paper. In general, we may hope that this ``conditional'' approach leads to proper I-Scores under \ref{PMMMAR} because of the following result.

\begin{restatable}{proposition}{identificationprop}[Adaptation of Proposition 2.4 in \citet{näf2024goodimputationmarmissingness}]\label{amazingprop}
    Under \ref{PMMMAR}, for all $j \in \{1, \ldots, d\}$,
    \begin{align}\label{conditionalindepunderMAR}
        P^*_{X_j \mid X_{-j}, M_j=0}=P^*_{X_j \mid X_{-j}}
    \end{align}
\end{restatable}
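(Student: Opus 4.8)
The plan is to reduce the distributional identity \eqref{conditionalindepunderMAR} to a single statement about the missingness propensity, namely that $\Prob(M_j = 0 \mid X^* = x)$ depends on $x$ only through $x_{-j}$, and then to derive this propensity statement from \ref{PMMMAR}. Writing $x=(x_j,x_{-j})$ and $\pi_j(x):=\Prob(M_j=0\mid X^*=x)$, and using $p^*(x, M_j=0)=\sum_{m\in L_j}\Prob(M=m\mid x)\,p^*(x)=p^*(x)\,\pi_j(x)$, I would first record by Bayes' rule that
\begin{align*}
p^*(x_j \mid x_{-j}, M_j = 0) = \frac{p^*(x_j, x_{-j}) \, \pi_j(x_j, x_{-j})}{\int p^*(u, x_{-j}) \, \pi_j(u, x_{-j}) \, d\mu(u)}.
\end{align*}
The key observation is that if $\pi_j(x_j,x_{-j})=g(x_{-j})$ does not depend on its first argument, then $g(x_{-j})$ factors out of numerator and denominator and the expression collapses to $p^*(x_j,x_{-j})/p^*(x_{-j})=p^*(x_j\mid x_{-j})$, which is exactly the claim. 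Thus the whole proposition reduces to showing that $\pi_j$ is free of $x_j$.

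Next I would establish that $\pi_j$ is free of $x_j$ by passing to the complementary event and applying \ref{PMMMAR} pattern by pattern. The helpful trick is to work with $\Prob(M_j=1\mid x)=1-\pi_j(x)=\sum_{m:\,m_j=1}\Prob(M=m\mid x)$, i.e.\ to aggregate over patterns in which variable $j$ is \emph{missing} rather than observed. I would first note the standard fact that \ref{PMMMAR} is equivalent, pattern by pattern, to the Rubin-type condition $\Prob(M=m\mid X^*=x)=\Prob(M=m\mid o(x,m))$; this follows from one application of Bayes' rule, since
\begin{align*}
p^*(o^c(x,m) \mid o(x,m), M = m) = \frac{\Prob(M = m \mid x)}{\Prob(M = m \mid o(x,m))} \, p^*(o^c(x,m) \mid o(x,m)),
\end{align*}
so that the left-hand side equals $p^*(o^c(x,m)\mid o(x,m))$ precisely when the leading ratio is $1$. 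For any pattern $m$ with $m_j=1$, the variable $x_j$ belongs to the missing block $o^c(x,m)$ and is therefore absent from $o(x,m)$; consequently each summand $\Prob(M=m\mid o(x,m))$ is a function of $x_{-j}$ only. Summing over all such $m$ shows that $\Prob(M_j=1\mid x)$, and hence $\pi_j(x)=1-\Prob(M_j=1\mid x)$, depends on $x$ only through $x_{-j}$, completing the reduction.

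I expect the main obstacle to be not the algebra but the bookkeeping around supports and null sets: the Bayes manipulations implicitly divide by $\Prob(M=m\mid o(x,m))$ and by $p^*(x_{-j})$, so the identities only make sense on the relevant supports. Here \Cref{overlap} is what keeps the conditioning events compatible across patterns, ensuring the supports $\X_{\mid m}$ coincide so the pattern-wise equivalence applies uniformly; I would phrase the conclusion as holding $p^*$-almost everywhere to sidestep these null-set issues. The conceptual crux worth flagging is the move to the complementary event $\{M_j=1\}$: trying to apply \ref{PMMMAR} directly to the patterns with $m_j=0$ fails, because $x_j$ is then observed and the per-pattern propensities genuinely depend on $x_j$; it is only the normalization $\sum_{m}\Prob(M=m\mid x)=1$ that forces the aggregate $\pi_j$ to be free of $x_j$.
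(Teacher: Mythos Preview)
Your argument is correct. The key reduction---showing that $\Prob(M_j=0\mid X^*=x)$ is free of $x_j$ by passing to the complementary event $\{M_j=1\}$, where each per-pattern propensity $\Prob(M=m\mid x)=\Prob(M=m\mid o(x,m))$ is automatically free of $x_j$ since $x_j\in o^c(x,m)$---is exactly the right idea, and the Bayes identity you use to translate this into the density statement is sound.

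The paper's own proof, by contrast, is not self-contained: it simply cites the $M_j=0$ identity as already established in \citet{näf2024goodimputationmarmissingness} and separately remarks that the companion identity for $M_j=1$ ``follows from \ref{PMMMAR} directly.'' Your proposal thus supplies the argument that the paper outsources to the reference. Your explicit observation that attacking the $m_j=0$ patterns head-on would fail (each such propensity genuinely depends on $x_j$, and only the normalization $\sum_m \Prob(M=m\mid x)=1$ rescues the aggregate $\pi_j$) is a useful clarification not spelled out in the paper.
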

To illustrate this, we consider Example \ref{Example1}. Here, the conditional distributions of $X_1 \mid X_2, X_3$ in each of the two patterns in which $X_1$ is observed are different from the pattern where $X_1$ is missing. Nonetheless it can be shown that,
\begin{align*}
    &p^*(x_1 \mid x_2,x_3, M_1=0)\\
     &= \frac{\Ind\{(x_2,x_3) \in [0,1]^2\} }{\Ind\{(x_2,x_3) \in [0,1]^2\}}\frac{1}{4} 2x_1 \Ind\{x_1 \in \{0,1\}\} +\frac{\Ind\{(x_2,x_3) \in [0,1]^2\}}{\Ind\{(x_2,x_3) \in [0,1]^2\}} \frac{1}{2}\cdot(2-x_1) \Ind\{x_1 \in \{0,1\}\}\\
     &=\Ind\{x_1 \in [0,1]\},
\end{align*}
which corresponds to $p^*(x_1 \mid x_2,x_3)$, the correct imputation distribution under \ref{PMMMAR}. Now assume we have access to $P^*_{X_{-1} \mid M_1=0}$, allowing to impute $X_{-1}^*=(X_2^*,X_3^*)$ for all patterns such that $M_1=0$, in a first step. In this case, masking  $X_1^*$ when it is observed ($M_1=0$), imputing with $H_{X_1 \mid X_{-1}}$ and evaluating the imputation with a proper scoring method, would to the correct imputation $P^*_{X_1 \mid X_{-1}}$ having the highest score by \eqref{conditionalindepunderMAR}. 


Naturally, having access to $P^*_{X_{-1} \mid M_1=0}$ is unrealistic and in fact violates Definition \ref{Iscoredef}, as the score can only be based on $P$. As such, the task of constructing a proper I-score becomes more difficult. 
Consider an approach wherein we directly impute all missing values for $m \in L_j$ simultaneously with $X_j$, i.e., by imputing $(X_j,o^c(X,m))$, and then score only $X_j$. This idea is studied in more detail with an alternative score in Appendix \ref{Sec_noprojectionsscore}. In Example \ref{Example1}, this would mean that in pattern $m_1\in L_1$, only $X_1$ would be imputed, while in pattern $m_2 \in L_1$, $(X_1,X_2)$ would be imputed simultaneously. However, in this example, artificial masking of $X_1$ in patterns $m_1$ and $m_2$ will lead to an MNAR mechanism, where \ref{PMMMAR} no longer holds. Consider for instance pattern $m_2$: As \ref{PMMMAR} holds, $p^*(x_2 \mid x_1, x_3, M=m_2)=p^*(x_2 \mid x_1, x_3)$, and thus drawing from $P^*_{X_2 \mid X_1, X_3}$ in pattern $m_2$ is valid. Once we artificially set $X_1$ missing, we however have to impute $X_1$ along $X_2$ and it can be checked that in this case $p^*(x_1, x_2 \mid x_3, M=m_2) \neq p^*(x_1,x_2 \mid x_3) $. Similarly, in pattern $m_1$, $p^*(x_1 \mid  x_2, x_3, M=m_1)=2x_1\Ind\{0\leq x_1 \leq 1\} \neq p^*(x_1 \mid x_2, x_3)$. Thus, drawing from $P^*_{X_1 \mid X_2, X_3}$ in pattern $m_1$ and from $P^*_{X_1, X_2 \mid X_2, X_3}$ in pattern $m_2$, $P^*$ will seem like a bad imputation if the draws are compared directly to the observed values. The imputation method that scores highest, thus becomes $H \in \mathcal{P}$, where $H$ imputes $P^*_{X_1 \mid X_2, X_3, M=m_1}$ in pattern $m_1$ and $P^*_{X_1,X_2 \mid X_3, M=m_2}$ in pattern $m_2$. This is true independently of how $H$ actually imputes, that is, even if $H_{X_1 \mid X_2,X_3, M=m_3}\neq P^*_{X_1 \mid X_2,X_3}$; its score will nonetheless be strictly higher than the one of $P^*$. At the same time, we also note that this is not the same as only considering the fully observed pattern (pattern $m_1$), as is done by the DR-I-Score (see Section \ref{Sec_DRIscore}). In this case, $2x_1p^*(x_1)$ would be wrongly judged to be the best imputation.




While further research might uncover better ways to construct a score than testing the imputation of a variable $X_j$ given the observed variables, the fundamental problem remains: masking observed values under \ref{PMMMAR} will in general lead to  MNAR mechanisms, which in turn will lead to poor scoring of the optimal imputation method (based on the original  \ref{PMMMAR} mechanism). One could resort to more stringent assumptions, that guarantee \ref{PMMMAR} even when $X_j$ is masked (see Appendix \ref{Sec_noprojectionsscore}). In the following, we instead study a different approach that will lead to a proper score under Example \ref{Example1}.

\section{Energy-I-Score: a proper scoring method} \label{Sec_Scoring}


Given the difficulties of scoring under MAR described in the last section, we study alternative assumptions. Let $O$ be the set of indices of fully observed variables,
\begin{align}\label{Odef}
    O=\{j: m_j = 0 \text{ for all } m \in \mathcal{M} \},
\end{align}
and, assume that $O \neq \emptyset$ and that for all $x \in \mathcal{X}$, $m \in \mathcal{M}$:
\begin{align}\label{RMAR}
    \Prob(M=m \mid x)=\Prob(M=m \mid x_{O})\tag{RMAR}.
\end{align}
This widely used condition \citep{RMAR_Psychometrika, RMAR2006, Mohan2013} may appear more explicit than \ref{PMMMAR} as it involves components that are always observed. However, it is  equivalent to \ref{CIMAR}, as shown in \cite{näf2024goodimputationmarmissingness}, and the question remains whether a weaker condition is possible. As such, we introduce a new condition that is motivated by both \ref{RMAR} and \eqref{conditionalindepunderMAR}. We recall the definition of $L_j$ as the set of patterns $m$ such that $m_j=0$. Consequently $L_j^c$ is the set of patterns $m$ in which variable $j$ is missing.

\begin{definition}
 For all $j \in \{1, \hdots, d\}$ such that $L_j^c \neq \emptyset$, $L_j \neq \emptyset$, let $
    O_j = \bigcap_{m \in L_j} \{l: m_l=0\}$ be the set of indices of variables that are always observed when the $j$th variable is observed. 
\end{definition}

\begin{definition}
    We say that Condition CIMAR$_j$ holds if  
    \begin{align}
L_j^c \neq \emptyset, L_j \neq \emptyset, O_j \neq \emptyset \text{ and } P^*_{X_j \mid X_{O_j}, M_j=0}=P^*_{X_j \mid X_{O_j}} \tag{CIMAR$_j$}\label{ass_score_j}  
\end{align}
\end{definition}

\begin{assumption}
\label{ass_score}
There exists $j \in \{1, \ldots, d\}$ such that Condition \ref{ass_score_j} holds.
\end{assumption}


Assumption \ref{ass_score} is a direct adaptation of \eqref{conditionalindepunderMAR}. If a variable $j$ is only observed when all other variables are observed, i.e. in pattern $m=0_d$, then $X_{O_j}=X_{-j}$ and we recover \eqref{conditionalindepunderMAR}. Moreover, if \ref{RMAR} holds, $X_{O_j}=X_O$ for all $j$ with missing observations.

Consider again Example \ref{Example1}. Here for $j=1$ and $j=2$, it holds that $O_j=\{3\}$. Moreover, we have that
\begin{align*}
    \Prob(M_1=0 \mid x_1,x_3)=\Prob(M=m_1 \mid x_1) +\Prob(M=m_2 \mid x_1) = \Prob(M_1=0).
\end{align*}
Thus, $X_1 \indep M_1 \mid X_3$ and Assumption \ref{ass_score_j} holds for $j=1$. Similarly, 
\begin{align*}
    \Prob(M_2=0 \mid x_2,x_3)=\int \Prob(M=m_2 \mid x_1) p^*(x_1 \mid x_2,x_3) d x_1 = \Prob(M=m_2),
\end{align*}
again showing $X_2 \indep M_2 \mid X_3$ and that Assumption \ref{ass_score_j} holds for $j=2$. Thus, while Assumption \ref{ass_score} holds whenever \ref{RMAR} holds as discussed above, Example \ref{Example1} is a case for which Assumption \ref{ass_score} holds but not \ref{RMAR}.

We note that for $j=2$ this is only true by independence of $X_1$ from $(X_2,X_3)$. In particular, if we introduce dependence between $X_1$ and $X_2$ in this example (e.g., through a copula), \ref{PMMMAR} still holds, yet $\Prob(M_2=0 \mid x_2,x_3)= \Prob(M=m_2 \mid x_2)$. Thus this modified version of Example \ref{Example1} shows that there are cases that are \ref{PMMMAR} but do not meet Assumption \ref{ass_score}. Maybe surprisingly, another modification of Example \ref{Example1} in Appendix \ref{Sec_Proofs} shows that there are cases that meet Assumption \ref{ass_score} but that are not \ref{PMMMAR}. Thus, while Assumption \ref{ass_score} is a direct adaptation of \eqref{conditionalindepunderMAR}, itself a consequence of \ref{PMMMAR}, neither condition implies the other.

\subsection{Population score}\label{Sec_Scoring_1}


The (expected) energy score that we will use scores a distributional prediction by comparing it to a test point. Taking the expectation over a set of test points, one can show that the score is maximized if the predictive distribution agrees with the actual distribution of the test points as described in Section \ref{Sec_Background}. The idea is now to score the imputation distribution $H_{X_j \mid x_{O_j}, M_j=1}$ by comparing it to a point drawn from $P^*_{X_j \mid x_{O_j}, M_j=0}$, which will be simply an observed point. That is, for all $j \in \{1, \ldots, d\}$, we define the score of the variable $j$ as
\begin{align}\label{Scoreforj}
  &S^j_{\textrm{\tiny NA}}( H, P ) = \nonumber\\
  &\E_{X_{O_j} \sim P_{X_{O_j}\mid M_j=0}}\Big[   \E_{\substack{X \sim H_{X_j \mid X_{O_j}, M_j=1}\\Y \sim P^*_{X_j \mid X_{O_j}, M_j=0}}}[ \| X - Y \|_{2}] - \frac{1}{2} \E_{\substack{ X \sim H_{X_j \mid X_{O_j}, M_j=1}\\ X' \sim H_{X_j \mid X_{O_j}, M_j=1}}}[\| X-X' \|_{2}] \Big] ,
\end{align}
where the outer expectation is taken over $X_{O_j} \sim P_{X_{O_j}\mid M_j=0}$, the distribution of all fully observed variables wrt to variable $j$.\footnote{Usually, in the scoring literature, one only considers the inner expectation, even though in practice ``scores are reported as averages over comparable sets of probabilistic forecasts'' \citep[page 222]{Gneiting2008}. We thus also consider the outer expectation to model the different test points.} Note that by assumption $X_{O_j}$ is observed for $M_j=0$, such that, $P_{X_{O_j}\mid M_j=0}=P^*_{X_{O_j}\mid M_j=0}$. Then, the full score is given as
\begin{align*}
    S_{\textrm{\tiny NA}}(H, P) = \frac{1}{|\mathcal{S}|} \sum_{ j \in \mathcal{S}} S_{\textrm{\tiny NA}}^j(H, P) ,
\end{align*}
where $\mathcal{S}=\{j: L_j^c \neq \emptyset, O_j \neq \emptyset\}$
is the set of variables $j$ that can be missing and such that at least one other variable is always observed when the $j$th variable is observed. 

\begin{restatable}{proposition}{proprietyprop}\label{proprietyprop}
    Under \Cref{ass_score} and \ref{PMMMAR},  $ S_{\textrm{\tiny NA}}$ is a proper I-Score. 
\end{restatable}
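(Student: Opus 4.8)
The plan is to exploit the additive structure of $S_{\textrm{\tiny NA}}$ and reduce propriety to a pointwise statement about the energy score. Since $S_{\textrm{\tiny NA}}(H,P)=\frac{1}{|\mathcal{S}|}\sum_{j\in\mathcal{S}}S^j_{\textrm{\tiny NA}}(H,P)$ is a convex combination of the per-variable scores, it suffices to establish the per-variable inequality $S^j_{\textrm{\tiny NA}}(H,P)\le S^j_{\textrm{\tiny NA}}(P^*,P)$ for every $j\in\mathcal{S}$ and every $H\in\mathcal{H}_P$; averaging then preserves it and yields \eqref{scorecondition}. Fixing such a $j$, I would push the comparison inside the outer expectation in \eqref{Scoreforj}: for each fixed value $x_{O_j}$ of the conditioning block, the bracketed term is exactly the expected energy score of the predictive law $F:=H_{X_j\mid x_{O_j},M_j=1}$ evaluated against test points drawn from $G:=P^*_{X_j\mid x_{O_j},M_j=0}$. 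Because the outer weight $P_{X_{O_j}\mid M_j=0}$ is a nonnegative measure, it is enough to optimize the integrand for ($P_{X_{O_j}\mid M_j=0}$-almost) every $x_{O_j}$ and then integrate.

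Next I would invoke the propriety of the energy score recalled in Section \ref{Sec_Background}. The gap between the integrand at $F$ and at the matching law $G$ is governed by the energy distance $\tfrac12\bigl(2\,\E_{X\sim F,Y\sim G}\|X-Y\|_{2}-\E_{X,X'\sim F}\|X-X'\|_{2}-\E_{Y,Y'\sim G}\|Y-Y'\|_{2}\bigr)\ge 0$, with equality if and only if $F=G$. With the orientation of \eqref{Scoreforj}, this shows that, uniformly in $x_{O_j}$, the per-variable score is extremized exactly by the imputation whose conditional law of $X_j$ given $X_{O_j}$ in the missing patterns equals $P^*_{X_j\mid X_{O_j},M_j=0}$, and the strictness in the energy-distance inequality gives a strict separation whenever $F\neq G$ on a set of positive $P_{X_{O_j}\mid M_j=0}$-measure.

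It then remains to verify that $H=P^*$ attains this optimum, i.e. that $P^*_{X_j\mid X_{O_j},M_j=1}=P^*_{X_j\mid X_{O_j},M_j=0}$. This is where the missingness assumptions enter. Condition \ref{ass_score_j} gives directly $P^*_{X_j\mid X_{O_j},M_j=0}=P^*_{X_j\mid X_{O_j}}$. Decomposing the right-hand side over the value of $M_j$,
\[
p^*(x_j\mid x_{O_j})=\Prob(M_j=0\mid x_{O_j})\,p^*(x_j\mid x_{O_j},M_j=0)+\Prob(M_j=1\mid x_{O_j})\,p^*(x_j\mid x_{O_j},M_j=1),
\]
and substituting $p^*(x_j\mid x_{O_j},M_j=0)=p^*(x_j\mid x_{O_j})$ forces $p^*(x_j\mid x_{O_j},M_j=1)=p^*(x_j\mid x_{O_j})$ as well, since $\Prob(M_j=1\mid x_{O_j})>0$ (as $L_j^c\neq\emptyset$). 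Hence the missing-pattern conditional of $X_j$ under $P^*$ coincides with the test law $G$, so $P^*$ realizes $F=G$ for a.e. $x_{O_j}$. Here \ref{PMMMAR}, through \Cref{amazingprop} and \eqref{conditionalindepunderMAR}, is what certifies that $P^*\in\mathcal{H}_P$ indeed draws the missing entries from this conditional, so that it is the ideal imputation to be ranked highest. Combining the three steps gives $S^j_{\textrm{\tiny NA}}(H,P)\le S^j_{\textrm{\tiny NA}}(P^*,P)$ for each $j\in\mathcal{S}$.

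I expect the main obstacle to be the identification step for those $j\in\mathcal{S}$ that do not themselves satisfy \ref{ass_score_j}. For such a $j$ the equality $P^*_{X_j\mid X_{O_j},M_j=1}=P^*_{X_j\mid X_{O_j},M_j=0}$ is not immediate, because the score conditions only on the always-co-observed block $X_{O_j}$ rather than on all of $X_{-j}$, and marginalizing \Cref{amazingprop} down to $X_{O_j}$ must contend with pattern-dependent shifts in the distribution of the remaining coordinates $X_{-j}\setminus X_{O_j}$. The crux is therefore to show that, under \ref{PMMMAR} together with \Cref{ass_score}, this marginalization preserves the required conditional equality for every summand in $\mathcal{S}$; once that identification is secured for each $j$, the energy-score argument of the first two steps closes the proof, with strictness inherited from the strict propriety of the energy score.
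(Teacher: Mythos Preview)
Your argument is essentially the paper's: fix $j\in\mathcal{S}$, use (strict) propriety of the energy score conditionally on $X_{O_j}$ to bound the inner bracket, integrate against $P_{X_{O_j}\mid M_j=0}$, then identify the optimum with $P^*$ via the mixture decomposition
\[
p^*(x_j\mid x_{O_j})=\Prob(M_j=0\mid x_{O_j})\,p^*(x_j\mid x_{O_j},M_j=0)+\Prob(M_j=1\mid x_{O_j})\,p^*(x_j\mid x_{O_j},M_j=1),
\]
which together with \ref{ass_score_j} forces $P^*_{X_j\mid X_{O_j},M_j=1}=P^*_{X_j\mid X_{O_j},M_j=0}$. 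The paper does these steps in the reverse order (identification first, then energy-score propriety), but the substance is identical.

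Regarding the obstacle you flag for those $j\in\mathcal{S}$ that do not themselves satisfy \ref{ass_score_j}: the paper's proof does not resolve this either. It simply invokes \ref{ass_score_j} for the running index $j$ and then sums over $\mathcal{S}$, i.e.\ it effectively treats the hypothesis as ``\ref{ass_score_j} holds for every $j\in\mathcal{S}$'' rather than the existential form of \Cref{ass_score}. Your instinct that \ref{PMMMAR} alone need not give $P^*_{X_j\mid X_{O_j},M_j=0}=P^*_{X_j\mid X_{O_j}}$ after marginalizing \Cref{amazingprop} from $X_{-j}$ down to $X_{O_j}$ is correct, and the paper offers no additional mechanism to bridge that gap. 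So do not look for a missing trick: write the per-$j$ argument exactly as you outlined, under \ref{ass_score_j} for that $j$, and sum.
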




Even if \ref{ass_score_j} holds for all $j \in \{1,\ldots, d\}$, the score would not be \textit{strictly} proper. For example, consider three variables $X_1, X_2, X_3$ and assume that $X_3$ is fully observed, while either $X_1$ or $X_2$ are missing (i.e., $\mathcal{M}=\{(0,1, 0), (1,0, 0)\}$). In this case, even if \ref{ass_score_j} holds for $j=1$, the imputation $\tilde{H}$ which has $\tilde{H}_{X_1 \mid X_3}=P^*_{X_1 \mid X_3}$ and $\tilde{H}_{X_2 \mid X_3}=P^*_{X_2 \mid X_3}$, but imputes such that $X_1 \indep X_2 \mid X_3$, i.e. $\tilde{H}_{(X_1, X_2) \mid X_3} = \tilde{H}_{X_1 \mid X_3} \times \tilde{H}_{X_2 \mid X_3} $, will achieve the same score as $P^*$, even if $P^*_{(X_1, X_2) \mid X_3} = P^*_{X_1 \mid X_3} \times P^*_{X_2 \mid X_3}$ does not hold.

\subsection{Score estimation}\label{Sec_Scorest}


We now describe an estimation strategy for $S_{\textrm{\tiny NA}}(H, P)$ based on a sample of $n$ observations with missing values. To differentiate in the following between the index of the observation $i$ and the index of the dimension $j$, we will write $X_{i, \cdot}/M_{i, \cdot}$ for observation $i$, such that we observe a sample $(X_{1, \cdot}, M_{1, \cdot}), \ldots, (X_{n, \cdot}, M_{n, \cdot})$. 

Fix $j \in \mathcal{S}$ and let $i$ such that $m_{i, \cdot} \in L_j$, which implies, by definition that $x_{i,j}$ is observed. 
We build a sample of $N$ points, $\tilde{X}_{i,j}^{(1)},  \ldots, \tilde{X}_{i,j}^{(N)}$ by imputing $x_{i,j}$ $N$ times, via sampling from $H_{X_j \mid X_{O_j}}$, as follows: 
\begin{enumerate}
    \item Create a new data set by concatenating the observed $(x_{i,j}, x_{i,O_j})$, $m_{i, \cdot} \in L_j$ and the imputed $(x_{i,j}, x_{i,O_j})$, $m_{i, \cdot} \in L_j^c$, as in Figure \ref{fig:scoreillustrationOj}, and set the \emph{observed} observations of $X_{j}$ to missing, i.e. $x_{i,j}=\texttt{NA}$ for $i$ with $m_{i, \cdot} \in L_j$.
\item Approximate the sampling from $H_{X_j \mid X_{O_j}, M_j=1}$ by simply imputing these artificially created NA values with $H$, $N$ times.
\end{enumerate}
As multiple imputation corresponds to drawing several times from the corresponding conditional distribution, this is a natural way of obtaining $\tilde{X}_{i,j}^{(l)}$, $l=1,\ldots, N$. If a method is unable to generate multiple imputations, $\tilde{X}_{i,j}^{(l)}$ is just a unique value copied $N$ times.

We can use the generated samples $\tilde{X}_{i,j}^{(1)},  \ldots, \tilde{X}_{N}^{(N)}$, 
in order to  estimate $S^j_{NA}( H, P )$:
\begin{align}\label{newscore}
    \widehat{S}^j_{\textrm{\tiny NA}}( H, P )= \frac{1}{|\{i: m_{i, \cdot} \in L_j\}|} \sum_{{i: m_{i, \cdot} \in L_j}}  \left(\frac{1}{2N^2} \sum_{l=1}^N \sum_{\ell=1}^N | \tilde{X}_{i,j}^{(l)} - \tilde{X}_{i,j}^{(\ell)}    | -  \frac{1}{N} \sum_{l=1}^N | \tilde{X}_{i,j}^{(l)} - x_{i,j} |   \right),
\end{align}
as in \citet[Equation (7)]{Gneiting2008}. This is nothing more than the empirical counterpart of Equation \eqref{Scoreforj}. 
The final score is then given as 
\begin{align}\label{finalformulascore}
   \widehat{S}_{\textrm{\tiny NA}}(H, P) =\frac{1}{|\mathcal{S}|} \sum_{ j \in \mathcal{S}} \widehat{S}_{{\textrm{\tiny NA}}}^j(H, P).
\end{align}

We refer to this approach as the energy-I-Score, referencing the use of the energy distance. The energy-I-Score score thus uses the ability of imputation methods to generate multiple imputations naturally in its scoring. Unfortunately, this can be computationally demanding, depending on the size of $N$. In our experiments in Sections \ref{Sec_Simulation}, \ref{Sec_Empirical} we chose $N=50$. Moreover, if all variables contain missing values, this would need to be repeated $d$ times. This would be infeasible for realistic dimensions if the full data set had to be imputed each time. However, note that in the data set created in Steps 1. and 2. only one variable has missing values, while all the others are observed. For instance, for an FCS method, this means that only one iteration. Moreover, for large $d$ it is possible to only consider a subset of variables $X_j$ to calculate $\hat{S}_{NA}$. For example, one could choose the $p < d $ variables with the highest missingness proportion. Finally, Appendix \ref{App_ChoiceofN} studies the effect of $N$ on our examples in more detail and shows that the score is also still accurate for smaller $N$, such as $N=20$. Appendix \ref{Sec_Details} provides further implementation details, including the handling of mixed data, and provides pseudo-code in Algorithm \ref{algorithm1}.

\begin{figure}
    \centering

    \begin{tikzpicture}[scale=0.8, every node/.style={scale=0.8},
    box/.style={draw, rectangle, minimum width=2.5cm, minimum height=1.8cm, align=center},
    smallbox/.style={draw, rectangle, minimum width=2cm, minimum height=1cm, align=center},
    bluebox/.style={draw=blue, rectangle, minimum width=2cm, minimum height=1cm, align=center},
    arrow/.style={-{Stealth[length=5pt]}, blue, thick}
]

\node[box] (observed) at (-0.5,3) {Observed\\Values};
\node[box, draw=blue] (imputed) at (-0.5,0) {Imputed\\Values};

\matrix (data) [matrix of math nodes, row sep=0.2cm, column sep=0.8cm, nodes={minimum width=1.5cm}, font=\large] at (4,1.5) {
  x_{1,j} & x_{1,O_j} \\
  x_{2,j} & x_{2,O_j} \\
  \vdots & \vdots \\
  x_{\ell,j} & x_{\ell,O_j} \\
  x_{\ell+1,j} & x_{\ell+1,O_j} \\
  \vdots & \vdots \\
};

\node at ($(data.north west)+(-0.5,-2.7)$) {\scalebox{2.2}{$\left(\vphantom{\begin{matrix}x_{1,j}\\x_{2,j}\\\vdots\\x_{\ell+1,j}\\\vdots\end{matrix}}\right.$}};
\node at ($(data.north east)+(0.5,-2.7)$) {\scalebox{2.2}{$\left.\vphantom{\begin{matrix}x_{1,j}\\x_{2,j}\\\vdots\\x_{\ell+1,j}\\\vdots\end{matrix}}\right)$}};

\node[box] (score) at (4,5.5) {Score\\with\\$x_{i,j}$ as a\\test point};
\node[box] (generate) at (9,3) {Generate\\$(X_1^{i},...,X_N^{i})$\\from $H_{X_j \mid x_{i,O_j}, M_j=1}$};
\node[bluebox] (learn) at (9,0.5) {Learn\\$H_{X_j \mid x_{i,O_j}, M_j=1}$};

\draw[arrow] ($(data.east)+(0,-1)$) to[out=0,in=180] (learn);
\draw[arrow] (learn) -- (generate);
\draw[arrow] (generate) to[out=90,in=0] (score);

\draw[blue, thick] ($(data.west)-(-0.3,0)$)-- ++(-0.2,0) -- ++(0,-2.6) -- ++(0.2,0);
\draw[blue, thick] ($(data.east)+(-0.3,0)$)-- ++(0.2,0) -- ++(0,-2.6) -- ++(-0.2,0);

\draw[black, thick] ($(data.west)-(-0.3,-2.6)$)-- ++(-0.2,0) -- ++(0,-2.6) -- ++(0.2,0);
\draw[black, thick] ($(data.east)+(-0.3,2.6)$)-- ++(0.2,0) -- ++(0,-2.6) -- ++(-0.2,0);

\end{tikzpicture}

    \caption{Conceptual illustration of the score approximation. First, the imputed values in blue are used to learn $H_{X_j \mid X_{O_j}, M_j=1}$. Then, for each $x_{i,O_j}$ for which $x_{i,j}$ is observed, we score the ``prediction'' $H_{X_j \mid X_{O_j}, M_j=1}$ using the energy score with test point $x_{i,j}$. In practice, this is done by (approximately) generating a sample $\tilde{X}_{i,j}^{(l)}, l=1,\ldots, N$ from $H_{X_j \mid x_{i,O_j}, M_j=1}$.}
    \label{fig:scoreillustrationOj}
\end{figure}
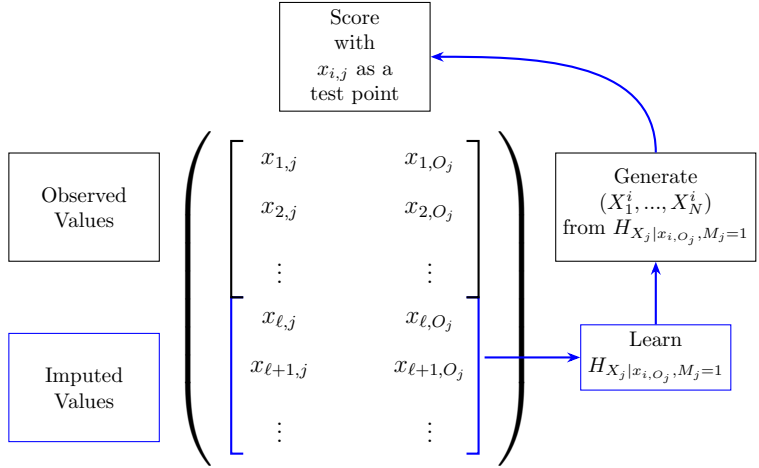

\section{Revisiting the DR-I-Score}

 \label{Sec_DRIscore}


\citet{ImputationScores} developed a first I-Score using the KL divergence and random projections $A \subset \{1,\ldots, d\}$, denoted ``DR-I-Score''. The set of projections could be chosen by a practitioner as a tool to increase the number of fully observed observations. The estimation procedure of the score is somewhat complicated, so here we focus on the essentials and refer to \citet{ImputationScores} for details.  

The DR-I-Score ranks any imputation $H$  by comparing the imputed distribution in a pattern $m$, $H_{X \mid M=m}$, with the distribution on the  fully observed pattern $P^*_{X \mid M=0}$ using the KL divergence 
$$D_{\textrm{\tiny KL}}(H_{X \mid M=m} \mid \mid P^*_{X \mid M=0}).$$ 
Specifically, given a distribution of possible projections $\mathcal{K}$, the population version of the DR-I-Score is defined as: 
       \begin{equation*}
           S_{\textrm{\tiny NA}}^{\textrm{\tiny DR}}(H,P) = - \E_{A \sim  \mathcal{K}, \tilde{M}_A \sim P_A^M}[D_{\textrm{\tiny KL}}(H_{X_A \mid M_A=\tilde{M}_A} \mid \mid P^*_{X_A \mid M_A=0})],
       \end{equation*}
where for a projection $A$, $M_A=(M_j)_{j \in A}$, $X_A=(X_j)_{j \in A}$, $P_A^M$ is the distribution of $M_A$, $H_{X_A \mid M_A=\tilde{M}_A}$ is the imputation distribution projected to $A$ conditional on the pattern $\tilde{M}_A$ and $P^*_{X_A \mid M_A=0_{|A|}}$ is the correct distribution of $X_A$ conditional on the pattern $M_A=0_{|A|}$. Thus for a projection $A \sim \mathcal{K}$, the score compares an imputed pattern $\tilde{M}_A$ with the fully observed pattern on that projection using the KL divergence. Due to the density ratios involved, the KL divergence is difficult to estimate in practice from multivariate samples. To address this, \citet{ImputationScores} use a classifier to approximate the density ratios and estimate the KL divergence, similar to \citet{Cal2020} in the context of two-sample testing. In doing so, a training and test set is defined on each projection $A$ and for each pattern $M_A$. Moreover, to make the score viable in practice, the set of projections $\mathcal{K}$ was, in turn, determined by the available patterns. This added an additional layer of complexity in the implementation in \cite{Iscorespackage}, as outlined in \citet[Section 5]{ImputationScores}.

\subsection{Under which conditions is the DR I-Score proper?}

\citet{ImputationScores} proved that the DR-I-Score is a proper I-Score if \ref{CIMAR} holds on \emph{each projection $A$}. However, inspecting the proof of their Proposition 4.1 reveals that the DR-I-Score is actually proper under \ref{EMAR} on each projection. Considering $A=\{1,\ldots,d\}$ for simplicity, the proof relies on the fact that
\begin{align*}
    &D_{\textrm{\tiny KL}}(H_{X \mid M=m} \mid \mid P^*_{X \mid M=0})\\
                 &=\E_{o(X, m) \sim P^*_{X \mid M=m}} \Big[ \int \log \left( \frac{h(o^c(x, m)| o(x, m), M=m)}{p^*(o^c(x,m)| o(x,m))} \right) h(o^c(x, m)| o(x, m), M=m) \\
         & \quad \times d\mu(o^c(x,m)) \Big]
         +\int \log \left( \frac{ p^*(o(x,m)| M=m)}{ p^*(o(x,m) |  M=0)} \right)  p^*(o(x,m)| M=m) d\mu(o(x,m)).\\ 
\end{align*}
Thus, the calculation of the KL divergence between the two patterns $M=m$ and $M=0$ can be split into two parts, whereby the second part is an irreducible error not affected by any imputation. However, the first part is minimized when $H=P^*$, since under \ref{EMAR}, $p^*(o^c(x, m)| o(x, m), M=m)=p^*(o^c(x, m)| o(x, m), M=0)=p^*(o^c(x, m)| o(x, m))$. The same arguments made for each projection $A 
\sim \mathcal{K}$ proves that DR-I-Score is proper under \ref{EMAR} on each projection $A \sim \mathcal{K}$. Thus, even though distribution shifts of $o(X, m)$ are possible even under \ref{EMAR}, minimizing the above KL divergence amounts to retrieve the true correct conditional distribution $P^*_{X|M=0}$. 


\ref{EMAR} is a stronger condition than \ref{PMMMAR}, as shown in Example \ref{Example1}. In particular, the DR-I-Score is not proper in the setting of Example \ref{Example1}. To see this, let us evaluate how the correct imputation $H=P^*$ is ranked with the DR-I-Score. Consider the pattern $m_3=(1,0,0)$. The DR-I-Score compares the imputation distribution on pattern $m_3$, that is $p^*(x_1 \mid x_2, x_3) p^*(x_2, x_3)$  to the complete distribution in pattern $m_1$,
\begin{align*}
    p^*(x_1 , x_2, x_3 \mid M=m_1)=p^*(x_1 \mid x_2, x_3, M=m_1) p^*(x_2, x_3)=2x_1 p^*(x_1 , x_2, x_3),
\end{align*}
as $p^*(x_2, x_3 \mid M=m)=p^*(x_2, x_3)$ for all $m$.
Thus, while we would like to score the imputation $p^*(x_1, x_2, x_3)$ highest, imputing by $h(x_1, x_2, x_3) =2x_1 p^*(x_1 , x_2, x_3)$ will lead to a score value of exactly zero, while
\[
 D_{\textrm{\tiny KL}}(P^*_{X \mid M=m} \mid \mid P^*_{X \mid M = 0}) = \int p^*(x_1, x_2 , x_3) \log \left( \frac{1}{x_1} \right)  d\mu(x_1, x_2,x_3) > 0.
\]
We summarize this in the following Corollary:

\begin{corollary}\label{DR_Propriety}
    The DR-I-Score in \cite{ImputationScores} is proper if \ref{EMAR} holds on all projections $A$ in the support of $\mathcal{K}$. There exist examples that are \ref{PMMMAR} but not \ref{EMAR} such that the DR-I-Score is not proper. 
\end{corollary}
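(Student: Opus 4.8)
The plan is to prove the two claims separately; both are consequences of the KL decomposition displayed above. For the propriety statement, I would fix a projection $A$ in the support of $\mathcal{K}$ and a pattern $\tilde{M}_A$, and split $D_{\textrm{\tiny KL}}(H_{X_A \mid M_A = \tilde{M}_A} \mid \mid P^*_{X_A \mid M_A = 0})$ into an integral over the observed coordinates and a conditional integral over the missing coordinates, exactly as shown for $A = \{1, \ldots, d\}$. The observed-part integral depends only on $P$ and $\tilde{M}_A$, not on the imputation $H$ (recall that every $H \in \mathcal{H}_P$ matches the observed marginals), so it is an irreducible term. The conditional part is an expected KL divergence between $h(o^c(x, \tilde{M}_A) \mid o(x, \tilde{M}_A), M = \tilde{M}_A)$ and $p^*(o^c(x, \tilde{M}_A) \mid o(x, \tilde{M}_A))$; being a KL divergence it is nonnegative and vanishes exactly when the two densities coincide.

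The key step is then to invoke \ref{EMAR}, under which $p^*(o^c \mid o, M = \tilde{M}_A) = p^*(o^c \mid o)$, so setting $H = P^*$ makes the conditional term zero and hence minimizes the total divergence on that projection. Since $S_{\textrm{\tiny NA}}^{\textrm{\tiny DR}}$ negates the divergence and averages over $A \sim \mathcal{K}$ and $\tilde{M}_A \sim P_A^M$, and the argument is termwise valid for every projection in the support, $P^*$ maximizes $S_{\textrm{\tiny NA}}^{\textrm{\tiny DR}}$ and \eqref{scorecondition} holds. I would emphasize that only the equality of conditional densities between pattern $\tilde{M}_A$ and the fully observed pattern is used, which is precisely \ref{EMAR} rather than the stronger \ref{CIMAR} originally invoked in \cite{ImputationScores}.

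For the second claim I would exhibit Example \ref{Example1}, which is \ref{PMMMAR} but not \ref{EMAR}, and compare the score of $H = P^*$ against a competitor. Focusing on pattern $m_3 = (1,0,0)$, the target against which the DR-I-Score compares is the complete distribution on the fully observed pattern $m_1$, equal to $2 x_1 \, p^*(x_1, x_2, x_3)$ since $p^*(x_1 \mid x_2, x_3, M = m_1) = 2 x_1 \, p^*(x_1 \mid x_2, x_3)$ and the observed marginal $p^*(x_2, x_3 \mid M = m)$ does not shift. Hence the imputation $h(x_1, x_2, x_3) = 2 x_1 \, p^*(x_1, x_2, x_3)$ attains KL divergence exactly zero on this pattern, whereas the correct imputation yields $\int p^*(x_1, x_2, x_3) \log(1/x_1) \, d\mu > 0$. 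As the score is minus the divergence, the incorrect imputation is ranked strictly above $P^*$, violating \eqref{scorecondition} and proving impropriety.

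The main obstacle is conceptual rather than computational: one must recognize that the decomposition isolates an $H$-independent observed-part term and that \ref{EMAR} — not \ref{CIMAR} — is exactly the condition that annihilates the remaining reducible term at $H = P^*$. The counterexample then reduces to the direct calculation already prepared above, the only subtlety being to confirm that the distribution shift in $p^*(x_2, x_3 \mid M = m)$ is immaterial here because that marginal is identical across all patterns.
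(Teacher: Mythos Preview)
Your proposal is correct and follows essentially the same approach as the paper: both rely on the displayed KL decomposition to isolate an $H$-independent observed-part term and a reducible conditional KL that vanishes at $H=P^*$ precisely under \ref{EMAR}, and both use Example~\ref{Example1} (pattern $m_3$ against $m_1$) to exhibit a \ref{PMMMAR}-but-not-\ref{EMAR} instance where the competitor $h(x)=2x_1 p^*(x)$ achieves zero divergence while $P^*$ does not. No substantive difference.
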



This discussion reveals two weaknesses of the DR-I-Score score. First, \ref{EMAR} on each projection might not be realistic. Second, and arguably more importantly, the KL-Divergence is not straightforward to estimate for multivariate observations. In fact, the projections are necessary as the number of fully observed points will be small for realistic datasets. In \citet{ImputationScores}, a classifier was used to provide an estimate of the KL divergence. If the classifier is not accurate, this can lead to undesirable rankings, as we demonstrate in Section \ref{Sec_Simulation}. This is in addition to the relative complex construction of the DR-I-Score. 


In contrast, instead of comparing pattern by pattern, the energy-I-Score works variable by variable, similar to the FCS approach, allowing for propriety also in settings such as in Example \ref{Example1}. This approach also naturally defines projections as $A=O_j$, instead of having to choose them somewhat arbitrarily. Moreover, it does not rely on a classifier and instead utilizes the energy score and sampling ability of the imputation methods. As such, instead of scoring (single) imputed datasets, the energy-I-Score naturally tests the ability of imputation methods to generate accurate samples. These properties are studied on simulated examples in the next section.



\section{Simulation Study} \label{Sec_Simulation}

We first demonstrate the abilities of our new score in simulated data, drawing inspiration from \cite{näf2024goodimputationmarmissingness} to design the following examples. First, we study a version of the leading example throughout the paper (Example \ref{Example1}), showcasing the performance of the score in a setting with a simple full data distribution, but complex (conditional) distribution shifts. Second, we study a Gaussian mixture example. Although this example is \ref{CIMAR}, it has strong distributional shifts in the observed variables, making it difficult to impute and score. The last example combines this with nonlinear relationships between the variables, showcasing an example where none of the imputation methods are ideal. Reproducible codes are available on GitHub at \url{github.com/KrystynaGrzesiak/ImputationScore}.



\paragraph{Imputation methods}  In each example, we rank the performance of the following FCS methods.
\begin{itemize}
    \item mice-cart - classification and regression trees \citep[see \textsf{R}-package \texttt{mice} ][]{mice}
    \item missForest - random forest  \citep[see \textsf{R}-package \texttt{missForest}][]{Stekhoven})
    \item mice-norm.predict - regression imputation   \citep[see \textsf{R}-package \texttt{mice}][]{mice}
    \item mice-norm.nob - Gaussian imputation \citep[see \textsf{R}-package \texttt{mice} ][]{mice}.
\end{itemize}

In all examples, we standardize the scores over the 10 repetitions so that they belong to $(-1,0)$.

\paragraph{Evaluation} 
First, we calculate the (negative) energy distance between the complete and imputed data sets, using the \texttt{energy} \textsf{R}-package \citep{energypackage}. As this ``score'' is based on the complete distribution, we refer to it as the full information score. We compare the orderings of the full information score with the energy-I-Score, which does not have access to the values underlying the missing values. The only hyperparameter to choose in this case is the number of samples $N$, which we set to $N=50$. Further choices of $N$ for the first two examples are explored in Appendix \ref{App_ChoiceofN}. We also compare the energy-I-Score with the DR-I-Score of \citet{ImputationScores}.


\paragraph{Results} The three examples considered in this section indicate that the ordering produced by the energy-I-Score is similar to the one of the full information score, even in the challenging distributional shift example in Section \ref{Sec_Gaussmixmodel}. In particular, it means that the energy-I-Score reliably finds the best imputation method, that is the one that has the highest full information score. An exception is the third example in Section \ref{Sec_Gaussmixmodelnonlinar} where none of the methods performs well. Here, the scores disagree quite heavily. This is to be expected as even in the population setting, propriety of the score only guarantees that the ideal method is maximizing the score in expectation. 
As with other metrics, there is no strictly correct ordering of non-optimal methods and different metrics will result in different rankings. For example, the energy-I-Score by construction tends to penalize methods that cannot produce multiple imputations. Thus while a method like missForest might produce an imputation that has a relatively small energy distance when compared to the complete data, it will produce $N$ identical imputation in the energy-I-Score procedure, resulting in a low score. Given the discussion in this and earlier papers \citep{VANBUUREN2018, ImputationScores,näf2024goodimputationmarmissingness}, this might be desirable.



\subsection{Uniform Example}\label{Sec_PMMMARExample}


We consider a setting similar to that of  \Cref{Example1}, where we add three variables. More precisely, 
we assume $X^*=(X^*_1, \ldots, X^*_6)$ are independently uniformly distributed on $[0,1]$ and specify the three patterns
\begin{align*}
    m_1 = (0, 0, 0, 0, 0, 0),\ \ m_2 = (0, 1, 0, 0, 0, 0),\ \ m_3 = (1, 0, 0, 0, 0, 0),
\end{align*}
with
\begin{align*}
    \Prob(M=m_1 \mid x) &= \Prob(M=m_1 \mid x_1) = x_1/3\\
    \Prob(M=m_2 \mid x) &= \Prob(M=m_2 \mid x_1) = 2/3-x_1/3\\
    \Prob(M=m_3 \mid x) &= \Prob(M=m_3) = 1/3.
\end{align*}
Since \ref{PMMMAR} but not \ref{EMAR} is true, trying to learn $p(x_1 \mid x_2, \ldots x_6)$ in the fully observed pattern leads to biased imputation, as shown in Figure \ref{fig:Application_1_Scores}. Moreover, as a consequence of Corollary \ref{DR_Propriety}, we expect the DR-I-Score to not be proper in this example. To illustrate this, we also add the imputation that draws from $p^*(x_1 \mid x_2, x_3, M=m_1)=2x_1 \Ind\{x_1 \in [0,1]\}$, which corresponds to the imputation discussed in Section \ref{Sec_DRIscore}, leading to a KL-Divergence of exactly zero. We denote this method by ``runifsq''. On the other hand, this example meets \Cref{ass_score}, as shown in Section \ref{Sec_Scoring_1} and thus we expect the energy-I-Score to be proper. To examine this, we also add an imputation drawing from the true distribution of $X_1$, i.e. uniform on $(0,1)$, denoted ``runif'' (corresponding to $P^*$). 

We draw $n=2'000$ i.i.d. copies of $(X^*, M)$, apply the missingness mechanism, and impute. Figure \ref{fig:Application_1_Scores} shows the results. As expected, the energy-I-Score recognizes that runif is the best method and produces an ordering similar to the full information score, without having access to the true underlying data. In contrast, the DR-I-Score scores the runifsq imputation highest, which empirically confirms that the score is not proper in this example.


In Appendix \ref{Sec_ExperimentsAppendix} we also consider the same example with dependence between $X_1, X_2, X_3$ and illustrate that even though Assumption \ref{ass_score} no longer holds, the energy-I-Score still identifies the best imputation.

\begin{figure}
    \centering
    \includegraphics[width=0.9\linewidth]{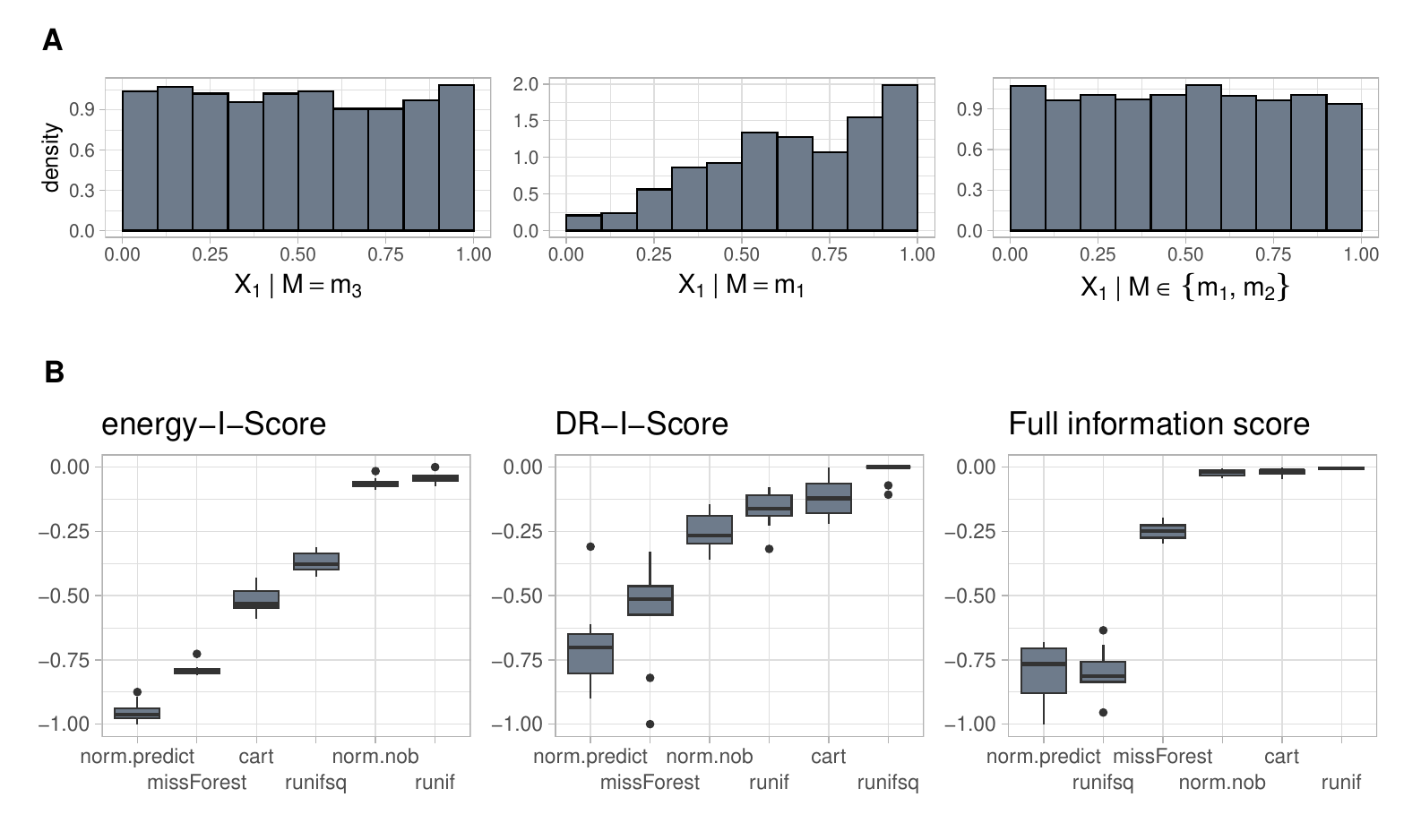}
    \caption{ (A) Illustration of Example \ref{Example1}. Left: Distribution we would like to impute $X^*_1 \mid M=m_3$. Middle: Distribution of $X_1$ in the fully observed pattern $(X_1 \mid M=m_1)$. Right: Distribution of all patterns for which $X_1$ is observed (Mixture of the distribution of $X_1$ in patterns $m_1$ and $m_2$). (B) Standardized scores for different imputations methods. Methods are ordered according to the mean score. }
    \label{fig:Application_1_Scores}
\end{figure}

\subsection{Gaussian mixture model} \label{Sec_Gaussmixmodel}

\begin{figure}
    \centering
    \includegraphics[width=0.9\linewidth]{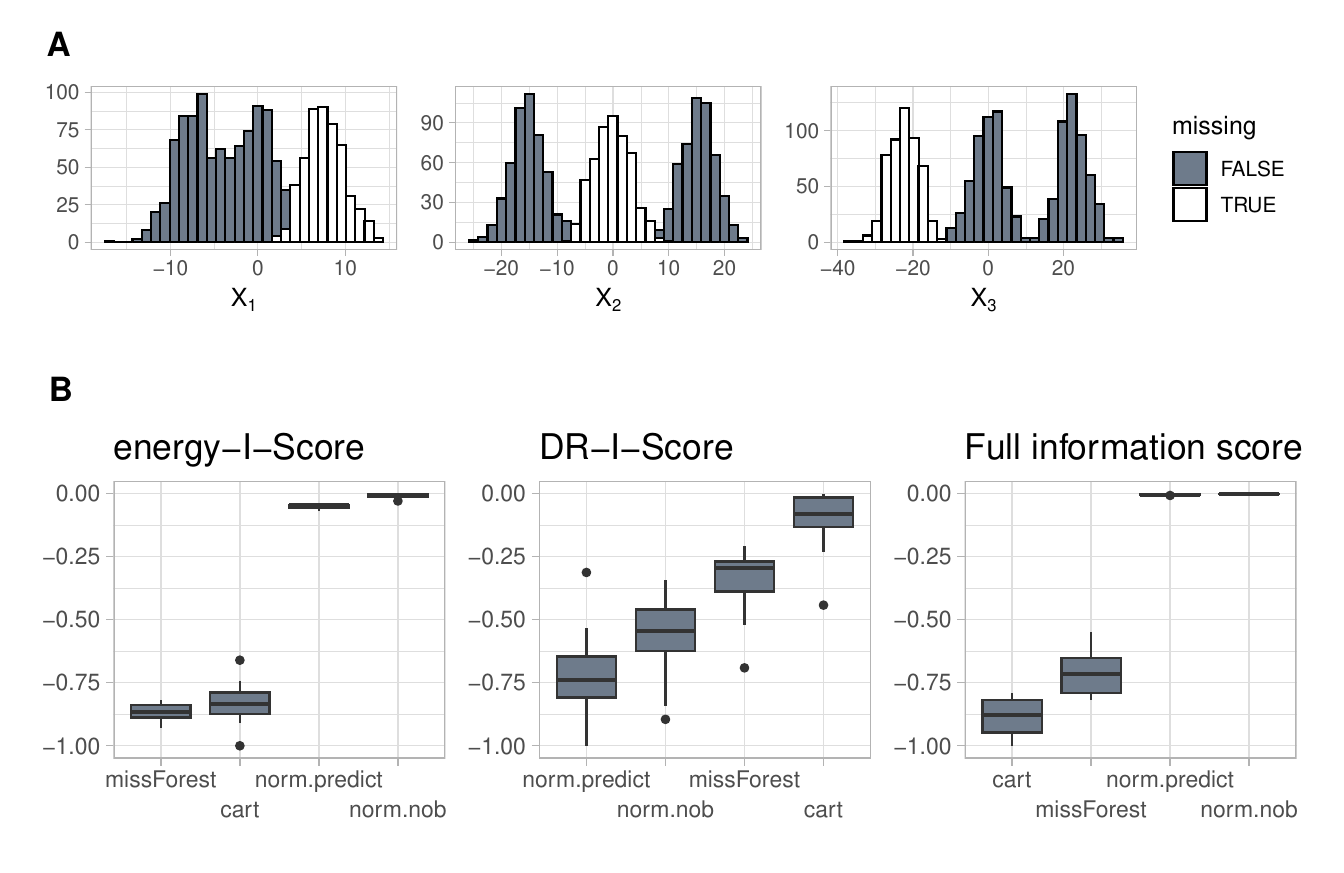} 
    \caption{Results for the Gaussian mixture model with distribution shift. (A) Illustration of $(X_1, X_2, X_3)$;  (B) Standardized scores for different imputations methods. Methods are ordered according to the mean score.}
    \label{fig:Application_2_Scores}
\end{figure}


We now turn to a Gaussian mixture model to put more emphasis on distribution shifts of observed variables. We take $d=6$ and 3 patterns,
\begin{align*}
    m_1 = (1, 0, 0, 0, 0 ,0), \ \   m_2 = (0, 1, 0, 0, 0 ,0),  \ \   m_3 = (0, 0, 1, 0, 0 ,0).
\end{align*}
The last three columns of fully observed variables, denoted $X_{O}^*$, are all drawn from three-dimensional Gaussians with means $(5,5,5)$, $(0,0,0)$ and $(-5,-5,-5)$ respectively, and a Toeplitz covariance matrix $\Sigma$ with $\Sigma_{i,j}=0.5^{|i-j|}$. Thus, there are relatively strong mean shifts between the different patterns. To preserve MAR, the (potentially unobserved) first three columns are built as 
\begin{align*}
    X_{O^c}^* = \mathbf{B}  X_{O}^* + \begin{pmatrix}
        \varepsilon_1\\
        \varepsilon_2\\
        \varepsilon_3
    \end{pmatrix} ,
\end{align*}
 where $\mathbf{B}$ is a $3 \times 3$ matrix of coefficients, $(\varepsilon_1, \varepsilon_2, \varepsilon_3)$ are independent $N(0,4)$ random errors and $O=\{4,5,6\}$ is the index of fully observed variables. For $\mathbf{B}$, we copy the vector $(0.5,1,1.5)$ three times, so that $\mathbf{B}$ has identical rows.
The data is thus Gaussian with linear relationships, but there is a strong distribution shift between the different patterns. However, this distributional shift only comes from the observed variables, leaving the conditional distributions of missing given observed unchanged. In particular, \Cref{ass_score} holds.

For each pattern, we generated 500 observations, resulting in $n=1.500$ observations and around $17\%$ missing values. In this example, we expect that the imputations able to adapt to shift in covariates will perform well, even if they are unable to capture complex dependencies between variables. Indeed, we note that $P^*$ corresponds to the Gaussian imputation (mice-norm.nob) with the (unknown) true parameters. As such, a proper score should rank mice-norm.nob highest. In contrast, forest-based imputations should perform the worst here, as they may not be able to deal properly with the distribution shift. 
On the other hand, they might still be deemed better than mice-norm.predict, which only imputes the regression prediction. The results for the (standardize) full information score, energy-I-Score and DR-I-Score are given in Figure \ref{fig:Application_2_Scores}. The full information and energy-I-Score behave as expected, with mice-norm.nob in the first place, and the forest-based methods last. Interestingly, both score mice-norm.predict second. The DR-I-Score instead ranks the forest-based methods highest, unable to recognize mice-norm.nob as the best method. As \ref{CIMAR} and in particular \ref{EMAR} hold in this example, this is likely due to the difficulty of the random forest classifier, used in the DR-I-Score implementation, to deal with covariate shifts. This shows a clear advantage of our new score, as it does not rely on any auxiliary method to estimate $H_{X_j \mid X_{O_j}}$, but instead directly generates and evaluates samples quality from the imputation method.

Thus, despite the challenging setting, the energy-I-Score still provides a sensible ordering. An interesting difference between the energy-I-Score and the full information score is that the energy-I-Score scores missForest lower than the full information score. However, this makes sense, as missForest is more severely punished when it creates $N$ imputations with very limited variation. In this sense, the energy-I-Score, without having access to the true data, might actually give a more accurate picture of the correct ordering.

\subsection{Mixture Model with Nonlinear Relationships}\label{Sec_Gaussmixmodelnonlinar}

We now turn to a more complex version of the model in Section \ref{Sec_Gaussmixmodel} to add nonlinear relationships to the distributional shifts. Using the same missingness pattern, and Gaussian variables $X_O$ we use a nonlinear function $f$ for the conditional distribution:
\begin{align}\label{nonlinearshiftexample}
    X_{O^c}^* =  f(X_{O}^*) + \begin{pmatrix}
        \varepsilon_1\\
        \varepsilon_2\\
        \varepsilon_3
    \end{pmatrix},
\end{align}
with
\begin{align*}
    f(x_1,x_2,x_3)=(x_3 \sin(x_1 x_2), x_2 \cdot \mathbf{1}\{x_2 > 0\}, \arctan(x_1) \arctan(x_2)).
\end{align*}
This introduces nonlinear relationships between the elements of $X_{O^c}^*$ and $X_O^*$, though the conditional distribution of $X_{O^c}^* \mid X_{O}^*$ is still Gaussian and the missingness mechanism is CIMAR. Moreover, \Cref{ass_score} is met here. For each pattern, we generate 500 observations, resulting in $n=1'500$ observations and around $17\%$ of missing values.

In this example, the ability to generalize is important, as is the ability to model nonlinear relationships. Consequently, this is a very difficult example and the ordering of the energy-I-Score and the full information score shown in Figure \ref{fig:Application_3_Scores} is quite different. In particular, they do not agree on the best two methods, though they both rank mice-cart high. This serves to illustrate that while at least the energy-I-Score should be able to identify the ``ideal'' imputation, there is no guarantee for what happens when all imputations perform badly. The disagreement of the scores should therefore be seen as more of a testament that none of the methods performs well than a sign that the scores themselves are flawed. 



\begin{figure}
    \centering
    \includegraphics[width=0.9\linewidth]{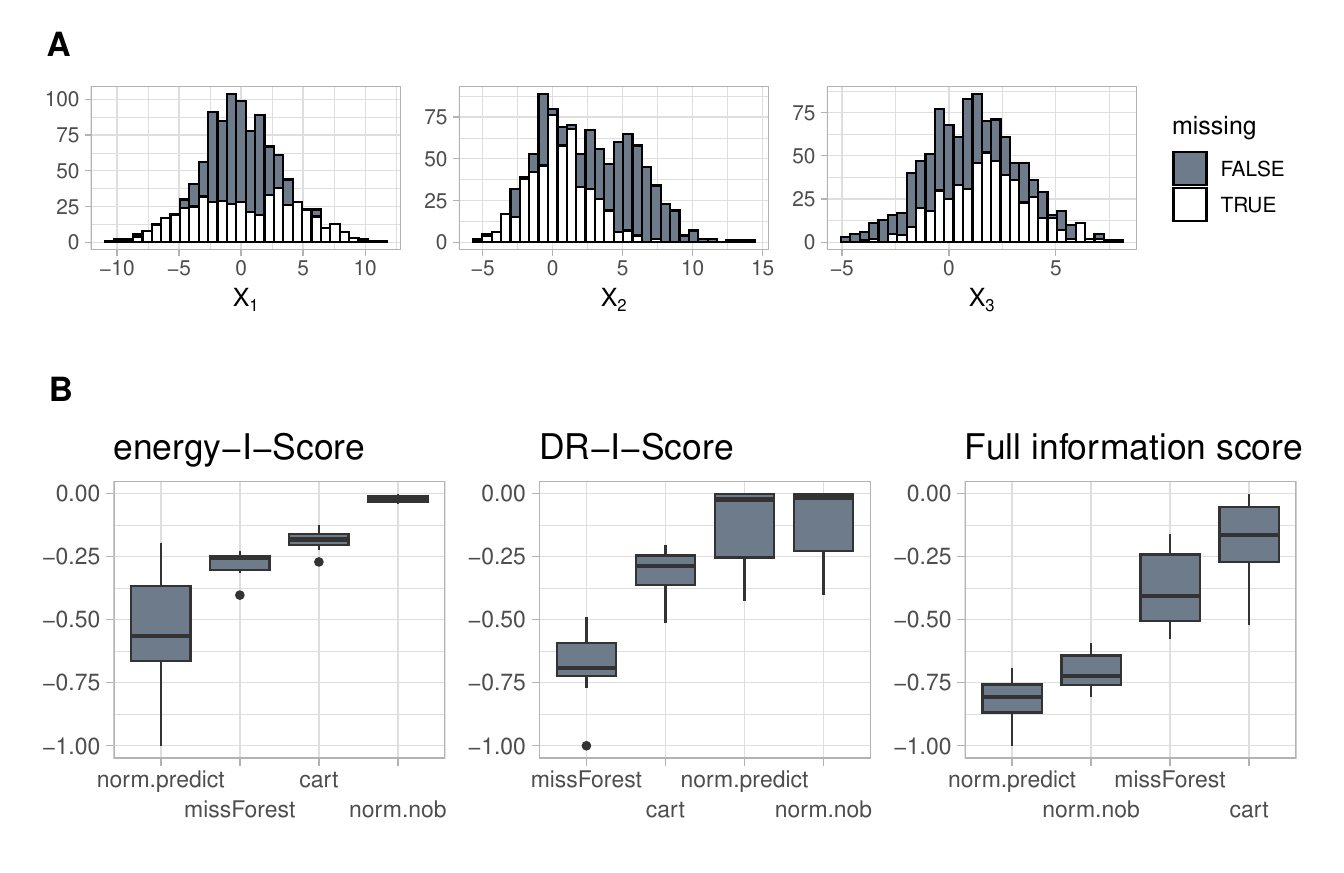} 
    \caption{Results for the Gaussian nonlinear mixture model. (A) Illustration of $(X_1, X_2, X_3)$, (B) Standardized scores for the Gaussian nonlinear mixture model. Methods are ordered according to the mean score.}
    \label{fig:Application_3_Scores}
\end{figure}

\section{Empirical Study: Downstream Tasks}\label{Sec_Empirical}

To further illustrate the use of the score, we consider an empirical analysis inspired by \citet{DML}, with added missing values: We study the effect of 401k eligibility on net worth using a double machine learning (DML) approach, based on a sample of 9,915 households from the 1990 Survey of Income and Program Participation (SIPP) in the U.S. The model we consider is the partially linear regression model, in the notation of \citet{DML}:
\begin{align*}
    Y = D\theta_0 + g_0(X) + U, \quad \mathbb{E}[U | X, D] = 0,
\\
D = m_0(X) + V, \quad \mathbb{E}[V | X] = 0
\end{align*}
The variable of interest $Y$ is the net financial assets of a household and following \cite{401KData, DML} we consider 9 covariates that include age, income, education, family size, marital status, and IRA participation. The treatment $D$ is taken to be 401(k) eligibility and it is commonly understood that unconfoundedness can be presumed when conditioned on the observed covariates (see e.g., \citep{401KData, DML} and the references therein).

Since the data set is complete, we introduce missing values artificially, using the \texttt{produce\_NA} function from the R-miss-tastic platform \citep{mayer2021rmisstastic}\footnote{The function is available at \url{https://rmisstastic.netlify.app/how-to/generate/misssimul}.}. This function is a wrapper around the widely used \texttt{ampute} routine from the \texttt{mice} package \citep{mice} that makes it easier to control the proportion of missing values. In addition, we also choose 4 random columns to be fully observed, ending up with a proportion of missingness of 6.7\%. We note that while it might be more desirable to consider a data set with real missing values, the use of artificial missing values allows for a comparison of the estimator under imputation to the one using full data. Moreover, the \texttt{ampute} function we use is an involved procedure to generate realistic missing values and is in no way tailored to our analysis.

After introducing missing values in the complete data set, we score six widely used imputation methods using the energy-I-Score; mice-rf, mice-cart, mice-sample, missForest, GAIN \citep{GAIN} and knn-imputation \citep{troyanskaya_missing_2001}. The latter two have not been used in Section \ref{Sec_Simulation} but are often the first choice in practice. We then impute the data set 10 times and estimate $\theta_0$ using the DML approach as in \cite{DML} for each imputation. As the first step of DML is to estimate $g_0$ and $m_0$ using a machine learning method, we use four different methods: LASSO (via cross-validated \texttt{glmnet} \citep{lasso}), Random Forest (\texttt{ranger} \citep{ranger}), Decision Trees (\texttt{rpart} \citep{rpart}) and Boosted Trees (\texttt{xgboost} \citep{xgboost}). In addition, we do the same for the complete data as a point of comparison.

Figure \ref{fig:empirical_study} shows the result of this procedure, both for the estimation of $\theta_0$ as well as its standard errors $\mathrm{SE}(\hat{\theta}_0)$ over 10 imputations. The plots are ordered according to the score values, showing mice-rf as the best imputation, followed by mice-cart, with values 1371.82 and 1542.96 respectively. For the point estimation in the top row, the value of the score aligns well with the estimate on the full data for all ML methods except rpart, where mice-cart appears to do a little better than mice-rf. Note that following Rubin's rules \citep[Chapter 9]{VANBUUREN2018}, the final estimator of $\theta_0$ would be the mean over the 10 imputations.\footnote{We note that our focus is on single imputation and none of the methods considered here are proper imputation methods, in that they neglect the uncertainty originating from the imputation method itself, which might lead to undercoverage of confidence intervals.} For the standard error estimation in the bottom row, mice-cart is arguably the best method. This likely mirrors the fact that, just as in the example of Section \ref{Sec_Gaussmixmodelnonlinar} there is again not one single ideal method in this example. Instead mice-cart and mice-rf are closely tied, each better capturing different aspects of the overall distribution. The energy-I-Score reflects this quite well here.



\begin{figure}
    \centering
    \includegraphics[width=1\linewidth]{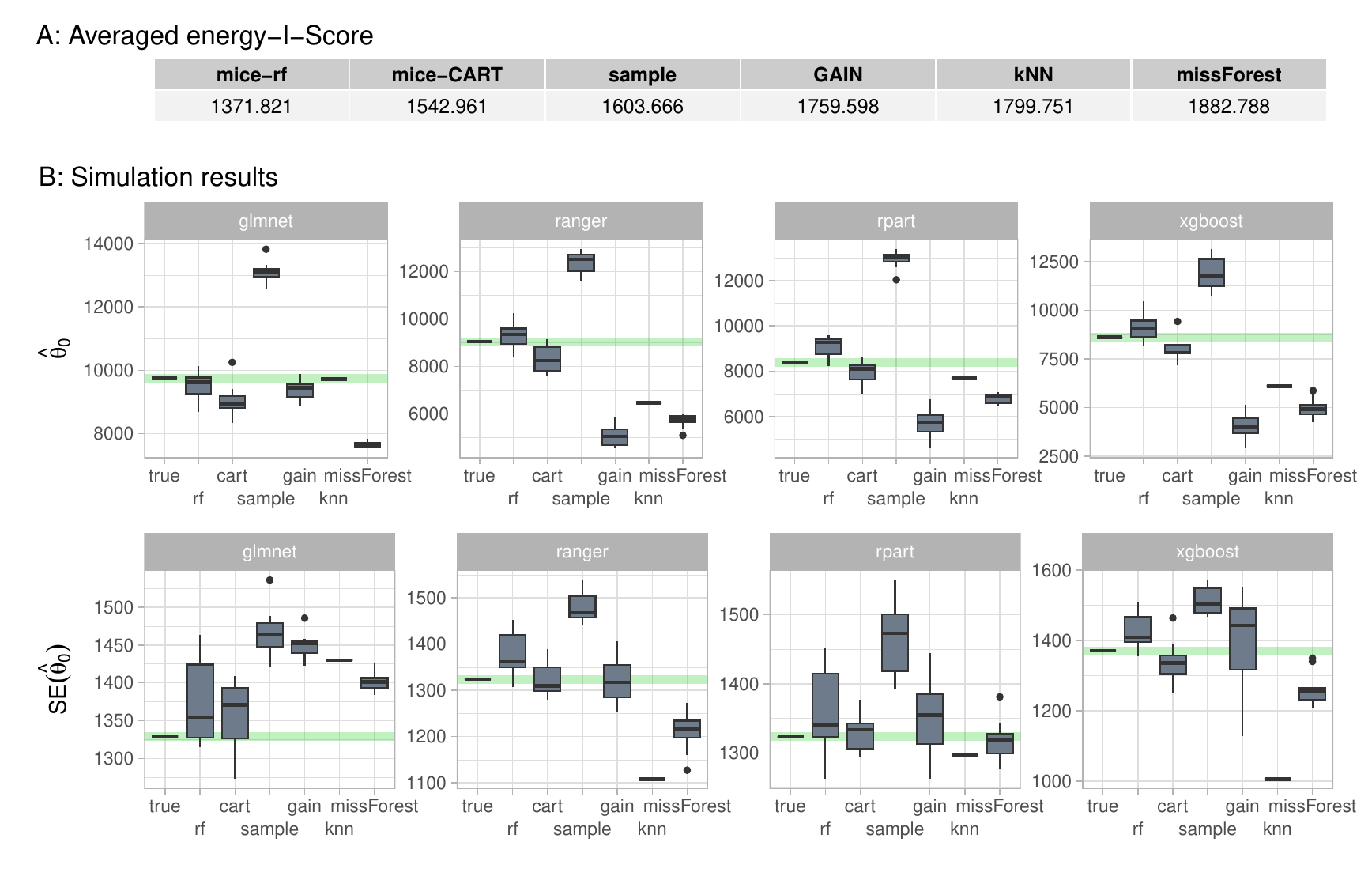} 
    \caption{(A) Energy-I-Score averaged across all replications. (B) Result of the DML approach based on different machine learning methods and different imputation methods, over 10 imputations, for the point estimation $\hat{\theta}_0$ (above) and standard error estimation $SE(\hat{\theta}_0)$ below. In each graph imputations are ordered according to the value of the energy-I-Score (best score to worst score from left to right). Green line denotes the result obtained on complete data.}
    \label{fig:empirical_study}
\end{figure}

\section{Conclusion}\label{Sec_Discussion}



In this paper, we developed a new I-Score to rank imputation methods. We studied the problem of scoring under \ref{PMMMAR} and defined a new missingness assumption, \Cref{ass_score}, which allowed us to construct a proper I-Score. We also studied an alternative score that utilizes a more straightforward adaptation of \ref{PMMMAR} in Appendix \ref{Sec_noprojectionsscore}.

As noted above, Assumption \ref{ass_score} is not the same as MAR and in fact, it is not weaker than \ref{EMAR}. However, adapting the score for \ref{EMAR} is straightforward. In fact, the adapted score for \ref{EMAR} and the presented score for Assumption \ref{ass_score} might be seen as the two extremes of a continuum: The score under \ref{EMAR} makes use of only the fully observed pattern, while the score under Assumption \ref{ass_score} considers the largest subset of fully observed variables such that we are able to consider all patterns. Consider a fixed $X_j$ and associated $O_j$. If we start to remove patterns, the number of observed variables in $O_j$ will increase. If $0 \in \mathcal{M}$, and we keep removing patterns $O_j$ will eventually contain all variables $X_{-j}$. In this case the only pattern left will be $m=0$, such that we obtain the score under \ref{EMAR}. 

While we focused on the version of the energy-I-Score using projections and variable-wise fully observed data, we also developed a score in Appendix \ref{Sec_noprojectionsscore} that is able to handle any missingness pattern. We denote this score by energy-I-Score$^*$. Although this score is not proper in examples such as Example \ref{Example1}, even under independence, the counterexamples used to show propriety does not hold use explicit pattern information and are often not available in practice. As such, the score has similar or better performance than the energy-I-Score in our empirical examples. On the other hand, the score is slow to implement and is less sample efficient than the energy-I-Score, making it much less practical. In particular, the test sets will tend to be much smaller for each pattern $m$ for the energy-I-Score$^*$, than for the energy-I-Score. As such, both scores have their advantages and downsides, leaving room for improvement.


\section*{Acknowledgements}

This work is part of the DIGPHAT project which was supported by a grant from the French government, managed by the National Research Agency (ANR), under the France 2030 program, with reference ANR-22-PESN-0017.
KG would like to acknowledge the support of the National Science Centre, Poland, under grant number 2021/43/O/ST6/02805.

\clearpage

\appendix

\section{Implementation Details}\label{Sec_Details}

In this section we detail the implementation of the energy-I-Score. First, in practice, $O_j$ might be empty for all $j$. In this case, we simply remove patterns until $O_j$ contains at least one element. This leads to the algorithm described in Algorithm \ref{algorithm1}.

\begin{algorithm}[H]\label{algorithm1}
\KwIn{Incomplete and imputed datasets, respectively $X, \tilde{X} \in \mathbb{R}^{n \times p}$, number of samples from imputation distribution $N \in \mathbb{N}$, imputation function $\mathcal{I}$.}
\KwOut{energy-I-Score estimate: $\widehat{S}_{\text{NA}}(H, P)$} 

\BlankLine
Define a set of variables with missing values: $\mathcal{S} = \{j: L_j^c \neq \emptyset\}$.

\ForEach{$j \in \mathcal{S}$}{
    $~~$\\
    \begin{enumerate}

        \item Skip current $j$ and return \texttt{NA} if the number of missing values $\left|\{i: m_{i, \cdot} \in L^c_j\}\right|$ or non-missing values $\big|\{i: m_{i, \cdot} \in L_j\}\big|$ in $j$ is less than $10$.

    \item  Get $O_j$ denoting the set of columns (excluding $j$) that are fully observed for non-missing rows of $X_j$. Namely, $O_j = \bigcap_{m \in L_j} \{l: m_l=0\}$.
    \begin{itemize}
            \item If $O_j = \emptyset$, then set $O_j = \{k^*\}$ where $k^*$ is the index of the variable (excluding $j$) that shares the largest number of jointly observed values with variable $j$, i.e., the one maximizing the number of observations not missing in both columns:
    
$$k^* = \argmax_{k \in \{1, \ldots, p\}\setminus\{j\}}  \big|\{i: m_{i, \cdot} \in L_j \cap L_k\}\big|.$$
    \end{itemize}

     \item Partition the dataset splitting it into train and test as follows:
$$
\text{Training set} = \begin{bmatrix}
\texttt{NA} & \tilde{X}_{\{i: m_{i, \cdot} \in L_j\}, O_j} \\
\tilde{X}_{\{i: m_{i, \cdot} \in L^c_j\}, j} & \tilde{X}_{\{i: m_{i, \cdot} \in L^c_j\}, O_j}
\end{bmatrix}, ~~ \text{Test set} = \begin{bmatrix}
\tilde{X}_{\{i: m_{i, \cdot} \in L_j\}, j}   \\
\end{bmatrix}. 
$$
        \item Draw samples $\tilde{X}_{i,j}^{(1)},  \ldots, \tilde{X}_{i,j}^{(N)}$ from  $H_{X_j|X_{O_j}, M_j=1}$ by imputing the \texttt{NA} part in the training set $N$ times via $\mathcal{I}$.

        \item Calculate     $$\widehat{S}^j_{\textrm{\tiny NA}}( H, P )= \frac{1}{|\{i: m_{i, \cdot} \in L_j\}|} \sum_{{i: m_{i, \cdot} \in L_j}}  \left(\frac{1}{2N^2} \sum_{l=1}^N \sum_{\ell=1}^N | \tilde{X}_{i,j}^{(l)} - \tilde{X}_{i,j}^{(\ell)}    | -  \frac{1}{N} \sum_{l=1}^N | \tilde{X}_{i,j}^{(l)} - x_{i,j} |   \right),$$ according to \eqref{newscore}, where $x_{i,j}$ is $i^{th}$ element in the test set.
    
        \item Compute weight $w_j = \dfrac{1}{n^2} \left|{\{i: m_{i, \cdot} \in L^c_j\}}\right| \cdot \big|\{i: m_{i, \cdot} \in L_j\}\big|$.
    \end{enumerate}
}
\Return{$\displaystyle \widehat{S}_{\textrm{\tiny NA}}(H, P) =\frac{1}{|\mathcal{S}|} \sum_{ j \in \mathcal{S}} w_j\widehat{S}_{\textrm{\tiny NA}}^j(H, P).$}
\end{algorithm}

Finally, when using the energy-I-Score a variable $X_j$ might be categorical. In such cases, the value of $\widehat{S}^j_{\textrm{\tiny NA}}(H, P)$ is computed based on a one-hot encoding of $X_j$. Specifically, we represent $X_j$ as a set of binary variables $X_{j_1}, \ldots, X_{j_p}$, corresponding to its $p$ possible categories, and treat them jointly as a single multivariate variable. Accordingly, instead of a univariate sample, we consider a multivariate sample $\tilde{\mathbf{X}}_{i,j}^{(1)}, \ldots, \tilde{\mathbf{X}}_{i,j}^{(N)}$, where each  
$\tilde{\mathbf{X}}_{i,j}^{(k)} = \left(\tilde{X}_{i,j_1}^{(k)}, \ldots, \tilde{X}_{i,j_p}^{(k)}\right)$ for  $k = 1, \ldots, N,$
represents one draw from the conditional distribution of the one-hot encoded representation of $X_j$ given $X_{O_j}$. The score is then computed as follows
$$\widehat{S}^j_{\textrm{\tiny NA}}( H, P )= \frac{1}{|\{i: m_{i, \cdot} \in L_j\}|} \sum_{{i: m_{i, \cdot} \in L_j}}  \left(\frac{1}{2N^2} \sum_{l=1}^N \sum_{\ell=1}^N \| \tilde{\mathbf{X}}_{i,j}^{(l)} - \tilde{\mathbf{X}}_{i,j}^{(\ell)}  \| -  \frac{1}{N} \sum_{l=1}^N \| \tilde{\mathbf{X}}_{i,j}^{(l)} - \mathbf{x}_{i,j} \|   \right).$$

\section{Further Empirical Results}\label{Sec_ExperimentsAppendix}

Here we consider further experiments. We start by analysing the uniform example under dependence, as well as a counter example, showing empirically that the energy-I-Score need not be strictly proper. Finally, we analyze the choice of $N$ in two examples.

\subsection{Uniform Example with Dependence}\label{Sec_Uniformwithdependence}

Here we consider the uniform example of Section \ref{Sec_PMMMARExample} but with dependence. In particular, we jointly draw $(Y_1, Y_2, Y_3)$ from a Gaussian with mean zero and covariance matrix $\Sigma$ with $\Sigma_{i,j}=0.7^{|i-j|}$. Then we transform $Y_j$ into uniform random variable by $X_j^*=\Phi(Y_j)$, where $\Phi$ is the cdf of the $N(0,1)$ distribution. This produces uniform random variables $X_1^*, X_2^*, X_3^*$ on $[0,1]$ with dependence and we introduce missingness in the same way as in Section \ref{Sec_PMMMARExample}. Scoring results are shown for $n=2000$ in Figure \ref{fig:uniform_dep} with the correct imputation distribution now denoted as ``dep\_runif''. Despite Assumption \ref{ass_score} not being true in this example, the ordering of the energy-I-Score is quite reasonable, showing dep\_runif first, closely followed by other methods that replicate the distribution well (according to the full information score).

\begin{figure}
    \centering
    \includegraphics[width=1\linewidth]{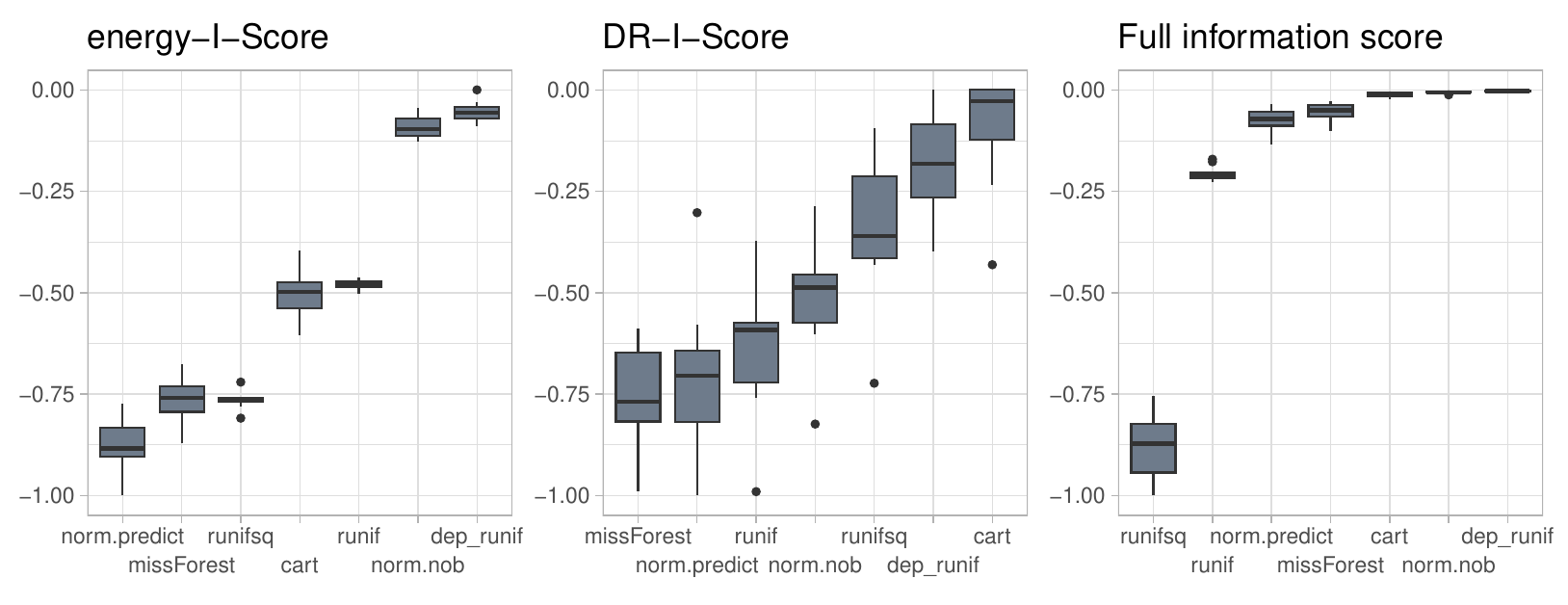} 
        \caption{Standardized scores for different imputations methods for the uniform example with dependence. Methods are ordered according to the mean score. }
    \label{fig:uniform_dep}
\end{figure}

\subsection{Strict Propriety Counter Example}\label{Sec_StrictProprietyCounterexample}

We now consider an example illustrating that the energy-I-Score is not strictly proper. In this example, we generate $2000$ observations from a $6$-dimensional normal distribution with means equal to $0$ and variances equal to $1$. In the random vector $X = (X_1, \ldots, X_6)$, all variables are independent except for the first two, with $\mathrm{cov}(X_1, X_2) = 0.7$. We consider three missingness patterns 

\begin{align*}
    m_1 = (1, 0, 0, 0, 0 ,0), \ \   m_2 = (0, 1, 0, 0, 0 ,0),   \ \   m_3 = (0, 0, 0, 0, 0 ,0).
\end{align*}

Each observation is randomly assigned one of the above patterns with equal probability, i.e., each pattern occurs with probability $1/3$. This results in approximately $11\%$ of missing values in the dataset. The missing values were imputed by sampling from either the conditional normal distribution (using mice with the norm.nob method) or an independent normal distribution (using rnorm). Naturally, in this case, sampling from the independent distribution is incorrect. Figure \ref{fig:nostrictproprety_2} shows the imputed variables $X_1$ and $X_2$ using both methods.

\begin{figure}
    \centering
    \includegraphics[width=1\linewidth]{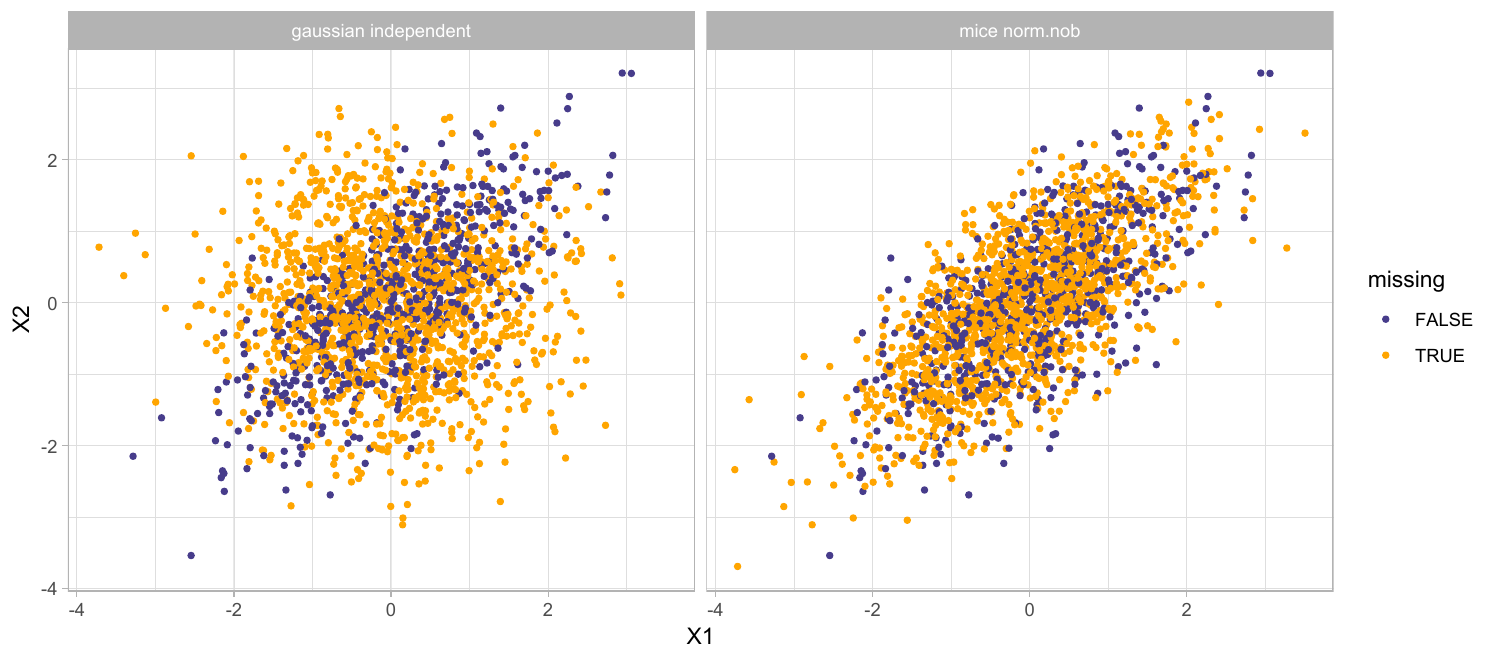} 
    \caption{Illustration of the imputation using Gaussian independent distribution (rnorm) and Gaussian conditional distribution (using mice norm.nob) in the Strict Propriety Counter Example.}
    \label{fig:nostrictproprety_2}
\end{figure}

The energy-I-Score is computed for $j \in \lbrace 1, 2 \rbrace$. In both cases, the set of observed variables is $O_j = \lbrace 3, 4, 5, 6 \rbrace$. As a result of the projection of the energy-I-Score, it never scores $(X_1,X_2)$ jointly and cannot differentiate the imputation method based on independent normal sampling and the correct imputation, with dependence between $X_1$ and $X_2$. Note that in this simple example, the DR-I-Score is strictly proper. However, depending on the projections chosen, this will not generally be the case in more complex examples.

\begin{figure}
    \centering
    \includegraphics[width=1\linewidth]{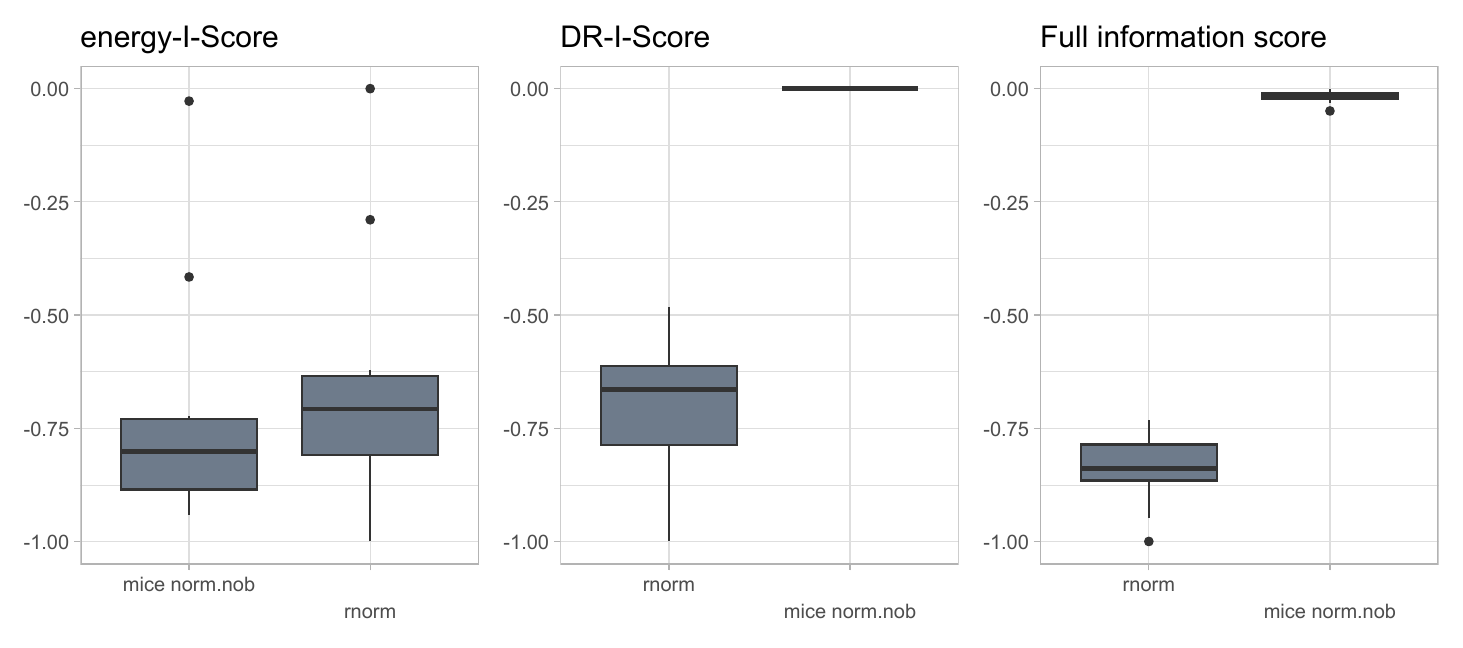} 
    \caption{Standardized scores for the Strict Propriety Counter Example. Methods are ordered according to the mean score.}
    \label{fig:nostrictproprety}
\end{figure}

\subsection{The Choice of $N$}\label{App_ChoiceofN}

To study the effect of $N$ in our examples, we rerun the uniform example of Section \ref{Sec_PMMMARExample} and the Gaussian mixture model from Section \ref{Sec_Gaussmixmodel} for $N$ starting at 5 and then ranging from 10 to 100 in steps of 10. Results are shown in Figure \ref{fig:N_choice_Uniform} for the uniform and in Figure \ref{fig:N_choice_Gaussian} for the Gaussian mixture example. Surprisingly, both appear to work quite well already for $N=5$, though particularly in the difficult uniform example, runif tends to be put in first place more reliably for $N=20$ and above.

\begin{figure}
    \centering
    \includegraphics[width=0.9\linewidth]{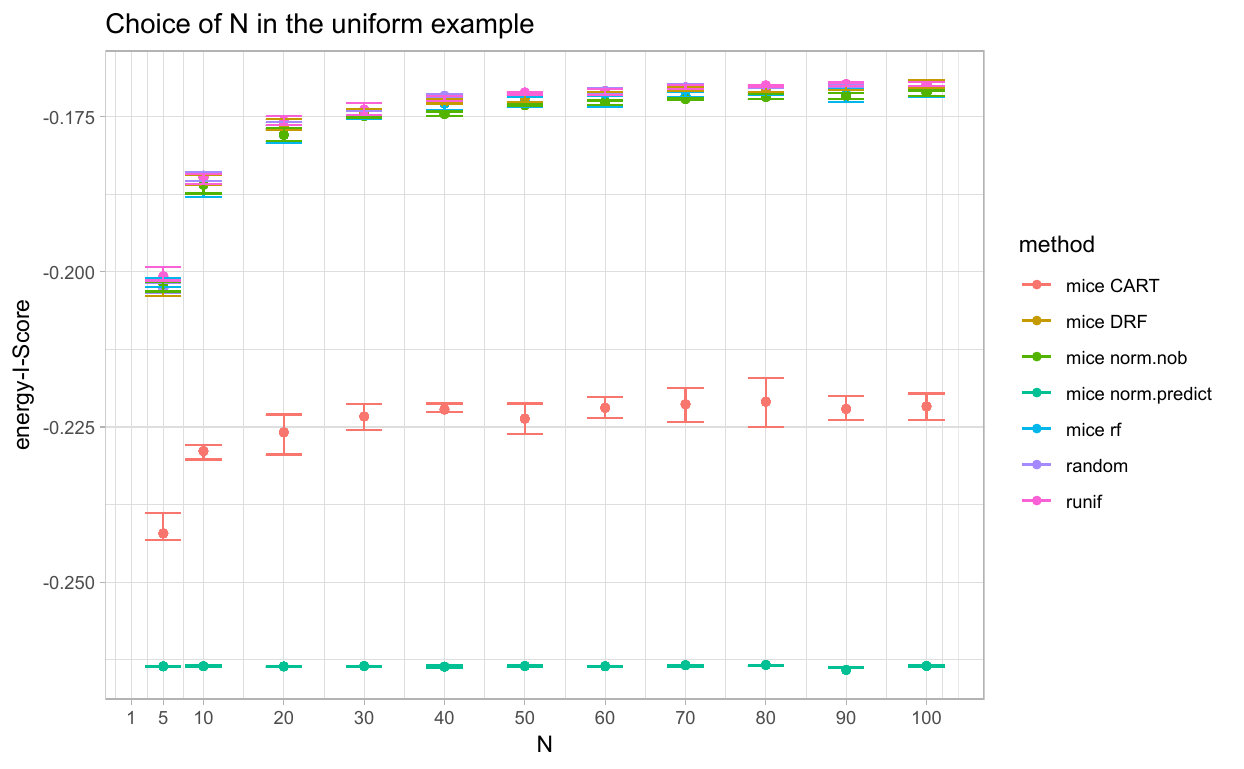} 
    \caption{Results from the uniform example in Section \ref{Sec_PMMMARExample} for different $N$ over 10 repetitions.}
    \label{fig:N_choice_Uniform}
\end{figure}

\begin{figure}
    \centering
    \includegraphics[width=0.9\linewidth]{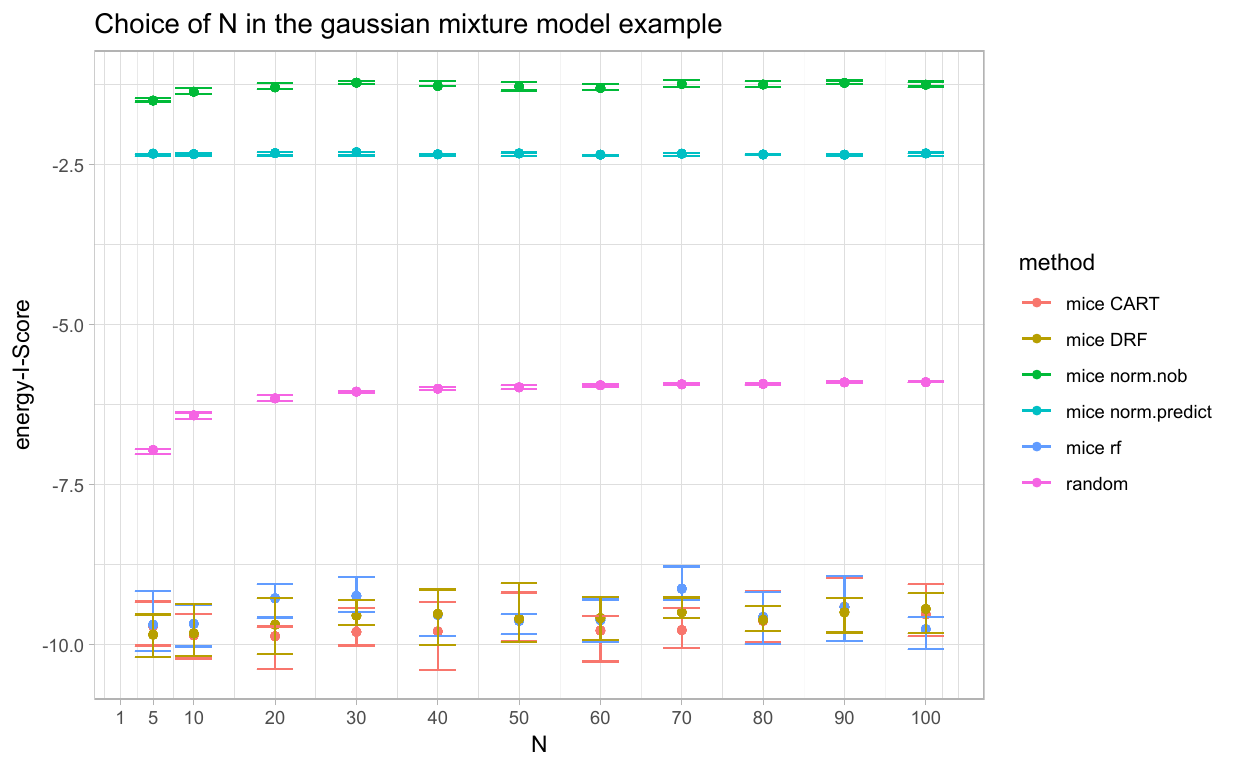} 
    \caption{Results from the Gaussian mixture example in Section \ref{Sec_Gaussmixmodelnonlinar} for different $N$ over 10 repetitions.}
    \label{fig:N_choice_Gaussian}
\end{figure}





\section{Proofs}\label{Sec_Proofs}

\identificationprop*

\begin{proof}
    \citet{näf2024goodimputationmarmissingness} showed that for all $x \in \X$ such that $p^*(x_{-j} \mid M_j=0) > 0$,
    \[
    p^*(x_j \mid x_{-j}, M_j=0 )=p^*(x_j \mid x_{-j}).
    \]
    Moreover, it follows from \ref{PMMMAR} directly that for all $x \in \X$ such that $p^*(x_{-j} \mid M_j=1) > 0$, 
     \[
    p^*(x_j \mid x_{-j}, M_j=1 )=p^*(x_j \mid x_{-j}).
    \]
    Thus for any $x \in \X$, $p^*(x_j \mid x_{-j}, M_j =1 )=p^*(x_j \mid x_{-j})$, showing the result.
\end{proof}

\proprietyprop*

\begin{proof}
We first note that for $j \in \mathcal{S}$, Condition \ref{ass_score_j} implies that $P^*_{X_j \mid X_{O_j}, M_j=1}=P^*_{X_j \mid X_{O_j}, M_j=0}=P^*_{X_j \mid X_{O_j}}$. Thus, considering the projection $(X_{A}, M_{A})$, with $A=O_j \cup \{j\}$, $P^*_{X_j \mid X_{O_j}}$ is the correct imputation. Indeed, for all $x \in \mathcal{X}$,
\begin{align*}
    p^*(x_j \mid x_{O_j})&=p^*(x_j \mid x_{O_j}, M_j=0) \Prob(M_j=0 \mid x_{O_j}) + p^*(x_j \mid x_{O_j}, M_j=1) \Prob(M_j=1 \mid x_{O_j})\\
    &=p^*(x_j \mid x_{O_j}) \Prob(M_j=0 \mid x_{O_j}) + p^*(x_j \mid x_{O_j}, M_j=1) \Prob(M_j=1 \mid x_{O_j}).
\end{align*}
This is equivalent to
\begin{align*}
    p^*(x_j \mid x_{O_j})\Prob(M_j=1 \mid x_{O_j})=p^*(x_j \mid x_{O_j}, M_j=1) \Prob(M_j=1 \mid x_{O_j}),
\end{align*}
showing that also $p^*(x_j \mid x_{O_j})=p^*(x_j \mid x_{O_j},M_j=1)$ as claimed. 
    We show that for each $j \in \mathcal{S}$, 
    \begin{align*}
     S_{NA}^{j}(H, P) \leq  S_{NA}^{j}(P^*, P)
    \end{align*}
    holds. To ease notation, we define
    \begin{align*}
        es(H,y) = \frac{1}{2} \E_{\substack{X \sim H\\X' \sim H}}[| X-X' |]- \E_{X \sim H}[ | X - y |]
    \end{align*}
    By (strict) propriety of the energy score (see e.g., \citet{gneiting})
    \begin{align*}
        \E_{Y \sim P^*_{X_j \mid x_{O_j}, M_j=0 }}[  es(H_{X_j \mid x_{O_j}, M_j=0 },Y)] \leq \E_{Y \sim P^*_{X_j \mid x_{O_j}, M_j=0 }}[  es(P^*_{X_j \mid x_{O_j}, M_j=0 },Y)].
    \end{align*}
    Taking expectations over $X_{O_j} \sim P^*_{X_{O_j} \mid M_j=0}$ on both sides shows that
    \begin{align}\label{inequality1}
           S_{NA}^{j}(H, P)= \E[ \E[  es(H_{X_j \mid X_{O_j}, M_j=0 },Y)]] \leq  \E[\E[  es(P^*_{X_j \mid X_{O_j}, M_j=0 },Y)]],
    \end{align}
    where we omitted the subscripts for a lighter notation. Moreover, \Cref{ass_score_j} implies that, for $X_{O_j} \sim P^*_{X_{O_j} \mid M_j=0}$, $P^*_{X_j \mid X_{O_j}, M_j=0 }=P^*_{X_j \mid X_{O_j}}=P^*_{X_j \mid X_{O_j}, M_j=1 }$ a.s., and
    \begin{align}\label{inequality2}
        \E[\E[  es(H^*_{X_j \mid X_{O_j}},Y)]]=\E[\E[  es(P^*_{X_j \mid X_{O_j}},Y)]]=S_{NA}^{j}(P^*, P).
    \end{align}
    Combining \eqref{inequality1} and \eqref{inequality2} gives the result for one $j$. Under Assumption \ref{ass_score}, we thus have that
\begin{align*}
    S_{\textrm{\tiny NA}}(H, P) = \frac{1}{|\mathcal{S}|} \sum_{ j \in \mathcal{S}} S_{\textrm{\tiny NA}}^j(H, P) \leq  S_{\textrm{\tiny NA}}(P^*, P) = \frac{1}{|\mathcal{S}|} \sum_{ j \in \mathcal{S}} S_{\textrm{\tiny NA}}^j(P^*, P),
\end{align*}
and thus the score is proper.
\end{proof}




\begin{lemma}\label{CoolLemma}
    Assumption \ref{ass_score} neither implies \ref{PMMMAR} nor is it implied by it.
\end{lemma}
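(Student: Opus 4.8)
The plan is to establish both non-implications by exhibiting two modifications of Example~\ref{Example1}. In both I keep the three patterns $\mathcal{M}=\{m_1,m_2,m_3\}$, so that $\mathcal{S}=\{1,2\}$ and $O_1=O_2=\{3\}$ are unchanged, and only alter the law of $X^*$ and the missingness probabilities. I will use two facts. First, \ref{PMMMAR} holds for a pattern $m$ exactly when $\Prob(M=m\mid x)$ depends on $x$ only through the observed coordinates $o(x,m)$ (this is the characterization already used to verify \ref{PMMMAR} in Example~\ref{Example1}, and follows from a one-line Bayes computation). Second, Condition~\ref{ass_score_j} is the conditional independence $X_j\indep M_j\mid X_{O_j}$, the remaining nonemptiness clauses holding automatically here. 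Since $\mathcal{S}=\{1,2\}$, Assumption~\ref{ass_score} holds iff at least one of $X_1\indep M_1\mid X_3$ or $X_2\indep M_2\mid X_3$ holds.

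For \ref{ass_score} $\not\Rightarrow$ \ref{PMMMAR}, I keep $X_1^*,X_2^*,X_3^*$ independent uniform on $[0,1]$ and set $\Prob(M=m_3\mid x)=1/3$, $\Prob(M=m_2\mid x)=(1+x_2)/3$ and $\Prob(M=m_1\mid x)=(1-x_2)/3$. Here $\Prob(M=m_2\mid x)$ depends on $x_2$, which is missing in $m_2$, so \ref{PMMMAR} fails (the mechanism is MNAR in $X_2$). On the other hand $\Prob(M_1=0\mid x)=\Prob(M=m_1\mid x)+\Prob(M=m_2\mid x)=2/3$ is constant, so $M_1$ is independent of $X$; in particular $X_1\indep M_1\mid X_3$, i.e.\ Condition~\ref{ass_score_j} holds for $j=1$ and Assumption~\ref{ass_score} holds. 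Only the routine check that the three probabilities are nonnegative and sum to one remains.

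For \ref{PMMMAR} $\not\Rightarrow$ \ref{ass_score}, I couple $X_1^*,X_2^*$ through a copula with strictly positive association, keep $X_3^*$ independent uniform, and set $\Prob(M=m_2\mid x)=2/3-x_1/3$, $\Prob(M=m_3\mid x)=x_2/3$ and $\Prob(M=m_1\mid x)=1/3+x_1/3-x_2/3$. Since $\Prob(M=m_2\mid x)$ depends only on $x_1$ (observed in $m_2$) and $\Prob(M=m_3\mid x)$ only on $x_2$ (observed in $m_3$), the characterization above gives \ref{PMMMAR}. To see that Assumption~\ref{ass_score} fails, note $\Prob(M_2=1\mid x_2,x_3)=\E[2/3-X_1/3\mid x_2,x_3]=2/3-\E[X_1\mid x_2]/3$, which is non-constant in $x_2$ because $X_1,X_2$ are dependent, so $X_2\not\indep M_2\mid X_3$; symmetrically $\Prob(M_1=1\mid x_1,x_3)=\E[X_2\mid x_1]/3$ is non-constant in $x_1$, so $X_1\not\indep M_1\mid X_3$. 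Both elements of $\mathcal{S}$ fail, hence Assumption~\ref{ass_score} fails.

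I expect the second construction to be the main obstacle. Because Assumption~\ref{ass_score} is existential over $j\in\mathcal{S}$, it is not enough to break one conditional independence: the original Example~\ref{Example1} already keeps $X_1\indep M_1\mid X_3$, so there only $j=2$ fails and the assumption would still hold. The construction must break \ref{ass_score_j} for both $j=1$ and $j=2$ at once while preserving \ref{PMMMAR}. The mechanism that makes this compatible is exactly that $O_j=\{3\}\subsetneq\{-j\}$: by Proposition~\ref{amazingprop}, \ref{PMMMAR} forces $X_j\indep M_j\mid X_{-j}$, conditioning on \emph{all} other coordinates, whereas \ref{ass_score_j} conditions only on $X_{O_j}=X_3$. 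Letting $M_j$ depend on the correlated partner variable — which is observed in the relevant pattern, so \ref{PMMMAR} is untouched — makes marginalizing that partner out reintroduce dependence on $X_j$. The remaining verifications are routine: the stated probabilities lie in $[0,1]$ and sum to one, and Assumption~\ref{overlap} holds for a full-support copula.
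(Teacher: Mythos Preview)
Your proposal is correct, and in the \ref{PMMMAR} $\not\Rightarrow$ Assumption~\ref{ass_score} direction it is more careful than the paper's own argument. For Assumption~\ref{ass_score} $\not\Rightarrow$ \ref{PMMMAR}, the paper uses a four-pattern construction (Example~\ref{Example3}) while you stay with the three patterns of Example~\ref{Example1}; both work and yours is marginally simpler. For \ref{PMMMAR} $\not\Rightarrow$ Assumption~\ref{ass_score}, the paper (Example~\ref{Example2}) keeps the mechanism of Example~\ref{Example1} unchanged and only introduces dependence between $X_1^*$ and $X_2^*$, arguing that this breaks \ref{ass_score_j} for $j=2$. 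But, exactly as you diagnose, this leaves $j=1$ intact: $\Prob(M_1=0\mid x)=x_1/3+(2/3-x_1/3)=2/3$ is constant irrespective of the joint law of $X^*$, so $X_1\indep M_1\mid X_3$ still holds and the existential Assumption~\ref{ass_score} is satisfied. Your construction instead makes $\Prob(M=m_3\mid x)$ depend on the correlated partner $x_2$, so that marginalizing $X_2$ out injects $x_1$-dependence into $\Prob(M_1=1\mid x_1,x_3)$, breaking \ref{ass_score_j} for both $j=1$ and $j=2$ simultaneously while keeping each selection probability a function of observed coordinates only --- this is the right fix. One small tightening: ``strictly positive association'' should be made concrete enough to guarantee that $\E[X_1\mid x_2]$ and $\E[X_2\mid x_1]$ are non-constant (e.g., a Gaussian copula with $\rho\neq 0$), since dependence alone does not force non-constant conditional means.
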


To prove Lemma \ref{CoolLemma}, we present one example that is \ref{PMMMAR} but does not meet Assumption \ref{ass_score} and one example where Assumption \ref{ass_score} holds but \ref{PMMMAR} does not:

    \begin{example}[\ref{PMMMAR} does not imply Assumption \ref{ass_score}]\label{Example2}

Consider the same setting as in Example \ref{Example1},
\begin{align}
\mathcal{M}= \{ m_1, m_2, m_3 \}=\left\{\begin{pmatrix} 0 & 0 & 0\end{pmatrix}, \begin{pmatrix}0 & 1 & 0\end{pmatrix}, \begin{pmatrix} 1 & 0 & 0 \end{pmatrix} \right \},
\end{align}
and $X^*_1$, $X^*_2$, $X^*_3$ each following a uniform distributed on $[0,1]$ and, 
\begin{align*}
    \Prob(M=m_1 \mid x) &= \Prob(M=m_1 \mid x_1) = x_1/3\\
    \Prob(M=m_2 \mid x) &= \Prob(M=m_2 \mid x_1) = 2/3-x_1/3\\
    \Prob(M=m_3 \mid x) &= \Prob(M=m_3) = 1/3.
\end{align*}
However, now we assume a dependence between $X_1$ and $X_2$, for instance by using the approach in Section \ref{Sec_Uniformwithdependence}, such that $p^*(x_1 \mid x_2,x_3)=p^*(x_1 \mid x_2)$ for $x_1, x_2, x_3$ in a set with nonzero probability.  

In this case \ref{PMMMAR} still holds holds, but 
\begin{align*}
    \Prob(M_2=0 \mid x_2,x_3)=\int \Prob(M=m_2 \mid x_1) p^*(x_1 \mid x_2,x_3)=\Prob(M=m_2 \mid x_2),
\end{align*}
showing that $M_2 \indep X_2 \mid X_3$, and thus Assumption \ref{ass_score}, does not hold.
\end{example}

\begin{example}[Assumption \ref{ass_score} does not imply \ref{PMMMAR}]\label{Example3}

Consider
\begin{align*}
\mathcal{M}= \left\{\begin{pmatrix} 0 & 0 & 0\end{pmatrix}, \begin{pmatrix}1 & 0 & 0\end{pmatrix}, \begin{pmatrix} 0 & 1 & 0 \end{pmatrix}, \begin{pmatrix} 1 & 1 & 0 \end{pmatrix} \right \},
\end{align*}
and $(X^*_1, X^*_2, X^*_3)$ independently uniformly distributed on $[0,1]$. Moreover, now we assume that
\begin{align*}
    \Prob(M=m_1 \mid x)&= \Prob(M=m_3 \mid x) =(1-\mathbf{1}\{ x_2 > 0.5\} \cdot 0.8) \cdot 0.5 \\
    \Prob(M=m_2 \mid x)&= \Prob(M=m_4 \mid x)=\mathbf{1}\{ x_2 > 0.5\} \cdot 0.8 \cdot 0.5.
\end{align*}
Contrary to Example \ref{Example1}, \ref{PMMMAR} no longer holds here, as $\Prob(M=m_3 \mid x), \Prob(M=m_4 \mid x)$ depend on $x_2$ which is missing. 
Nonetheless, for $O_1=O_2 = \{3\}$, we have that
\begin{align*}
    \Prob(M_1=1 \mid x_1,x_3)&=\int (\Prob(M=m_2 \mid x_2) +\Prob(M=m_4 \mid x_2)) p^*(x_2 \mid x_1,x_3) dx_2\\
    &=\int \mathbf{1}\{ x_2 > 0.5\}*0.8 \  p^*(x_2 \mid x_1,x_3) dx_2 \\
    &= 0.5 \cdot 0.8.
\end{align*}
Thus, $X_1 \indep M_1 \mid X_3$ and Condition \ref{ass_score_j} holds for $j=1$. Similarly, 
\begin{align*}
    \Prob(M_2=1 \mid x_2,x_3)&=\int (\Prob(M=m_3 \mid x_2) +\Prob(M=m_4 \mid x_2)) p^*(x_2 \mid x_3) dx_2\\
    &=\int 0.5 \  p^*(x_2 \mid x_3) dx_2 \\
    &= 0.5,
\end{align*}
again showing $X_2 \indep M_2 \mid X_3$ and that Condition \ref{ass_score_j} holds for $j=2$.
\end{example}

\section{energy-I-Score without projections}\label{Sec_noprojectionsscore}



Here we formalize the score that was alluded to in the main text. The score does not use projections, although at the price of another missingness assumption that is strictly stronger than \ref{PMMMAR}. We start by introducing additional notation: As $X_{-j}$, $M_{-j}$ denotes all variables except variable $j$, we define $o(X,M)_{-j}=o(X_{-j}, M_{-j})$.

Following the intuition in Section \ref{Sec_Scoringdifficulty}, we first define the following population score:

\begin{align}\label{Scoreforj1}
  &S^{j,1}_{\textrm{\tiny NA}}( H, P ) = \nonumber\\
    &\E_{(o(X,M)_{-j}, M) \sim P_{o(X,M)_{-j}\mid M}\times \Prob_{M\mid M \in L_j}}\Big[   \E_{\substack{X \sim H_{X_j \mid o(X,M)_{-j}, M }\\Y \sim P^*_{X_j \mid o(X,M)_{-j}, M}}}[ \| X - Y \|_{2}] \nonumber\\
  &- \frac{1}{2} \E_{\substack{ X \sim H_{X_j \mid o(X,M)_{-j}, M }\\ X' \sim H_{X_j \mid o(X,M)_{-j}, M}}}[\| X-X' \|_{2}] \Big]
\end{align}
where the outer expectation is taken over the joint distribution of $(o(X,M)_{-j}, M)$, for $M \in L_j$. Then, the full score is given as
\begin{align*}
    S_{\textrm{\tiny NA}}^1(H, P) = \frac{1}{|\mathcal{S}|} \sum_{ j \in \mathcal{S}} S_{\textrm{\tiny NA}}^{j,1}(H, P) ,
\end{align*}
where $\mathcal{S}\subset \{1,\ldots, d\}$. As discussed in Section \ref{Sec_Scoringdifficulty}, for $m \in L_j^c$ we would like to score $P^*_{X_j \mid o(X,m)_{-j}}$ highest, when in fact, for $m \in L_j$, $P^*_{X_j \mid o(X,m)_{-j}, M=m}$ is the correct imputation distribution. For instance, in Example \ref{Example1}, when scoring $X_1$ we would like to give highest score to the imputation that simply draws from independent uniform distributions, $p^*(x_1 \mid x_2, x_3)=\mathbf{1}\{x_1 \in [0,1]\}$ instead of $p^*(x_1 \mid x_2, x_3, M=m_1)=2x_1 \mathbf{1}\{x_1 \in [0,1]\}$ and $p^*(x_1 \mid x_2, x_3, M=m_2)=2/3 (2-x_1)\mathbf{1}\{x_1 \in [0,1]\}$. As such the following condition is necessary:

\begin{assumption}\label{ass_score_noproj_j0} 
        For all $j \in \mathcal{S}$,
    \begin{align}
&p^*( x_j \mid o(x ,m )_{-j}, M =m ) \nonumber \\
&= p^*(x_j\mid o(x ,m )_{-j})\ \  \forall m \in L_j, x \in \X_{\mid m}.
\end{align}
\end{assumption}

Since
\begin{align*}
    p^*(x_j \mid o(x,m)_{-j}, M=m )&=\int p^*(x_j, o^c(x,m) \mid o(x,m)_{-j}, M=m ) d o^c(x,m)\\
    &= \int p^*(x_j, o^c(x,m) \mid o(x,m)_{-j} ) d o^c(x,m)\\
    &=p^*(x_j \mid o(x,m)_{-j}),
\end{align*}

Assumption \ref{ass_score_noproj_j0} is implied by the following stronger assumption:

\begin{assumption} \label{ass_score_noproj_j1}
        For all $j \in \mathcal{S}$,
    \begin{align}
&p^*(o^c(x ,m ), x_j \mid o(x ,m )_{-j}, M =m ) \nonumber \\
&= p^*(o^c(x ,m ), x_j\mid o(x ,m )_{-j})\ \  \forall m \in L_j, x \in \X_{\mid m}.  
\end{align}
\end{assumption}




We note the difference of Assumption \ref{ass_score_noproj_j1} to \ref{PMMMAR}, which requires that for all $m \in \mathcal{M}$, $x \in \X_{\mid m}$,
    \begin{align*}
&p^*(o^c(x ,m ) \mid o(x ,m ), M =m )= p^*(o^c(x ,m )\mid o(x ,m )).   
\end{align*}
Thus we effectively ask that for $m \in L_j$, the relation of \ref{PMMMAR} holds as if $x_j$ was missing. This reveals two possible implementations of the score in \eqref{Scoreforj1}: First, an implementation without projection where for each $m \in L_j$, $(X_j, o^c(X,m)) \mid o(X,m)_{-j}$ is imputed jointly, while only $X_j$ is scored, as discussed in Section \ref{Sec_Scoringdifficulty}. Second, a more direct implementation, where for each $m \in L_j$, we impute $X_j \mid o(X,m)_{-j}$ instead. We consider the second approach below.

\begin{proposition}\label{proprietyofnewscore}
    Under \ref{PMMMAR}, $S_{\textrm{\tiny NA}}^1$ is a proper I-Score if and only if Assumption \ref{ass_score_noproj_j0} holds.
\end{proposition}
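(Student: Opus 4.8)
The plan is to exploit strict propriety of the energy score variable by variable, reducing the statement to a pointwise comparison, for each pattern $m \in L_j$, between the imputation law that the ideal candidate $P^*$ supplies and the law generating the test points. Following the treatment of $S^j_{\textrm{\tiny NA}}$ in the proof of Proposition~\ref{proprietyprop}, I would write the inner term of $S^{j,1}_{\textrm{\tiny NA}}$ as the expected energy score $\E_{Y \sim P^*_{X_j \mid o(X,m)_{-j}, M=m}}[es(Q, Y)]$, where the predictive law is $Q = H_{X_j \mid o(X,m)_{-j}, M=m}$ and the test points follow $P^*_{X_j \mid o(X,m)_{-j}, M=m}$; by strict propriety this is maximized over $Q$ exactly at $Q = P^*_{X_j \mid o(X,m)_{-j}, M=m}$. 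The decisive preliminary step is to identify what $P^*$ feeds in: since the masked $X_j$ must be imputed from the observed covariates alone, and since \ref{PMMMAR} together with Proposition~\ref{amazingprop} forces the distribution that replicates the complete data to condition only on those covariates, $P^*$ contributes $P^*_{X_j \mid o(X,m)_{-j}}$, the conditional marginalized over $M$. The whole argument then turns on whether $P^*_{X_j \mid o(X,m)_{-j}}$ coincides with the test law $P^*_{X_j \mid o(X,m)_{-j}, M=m}$ for every $m \in L_j$, which is precisely Assumption~\ref{ass_score_noproj_j0}.

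For the ``if'' direction I would fix $H \in \mathcal{H}_P$ and $j \in \mathcal{S}$ and apply propriety of the energy score pointwise in $o(X,m)_{-j}$ and $m \in L_j$, obtaining $\E_Y[es(H_{X_j \mid o(X,m)_{-j}, M=m}, Y)] \le \E_Y[es(P^*_{X_j \mid o(X,m)_{-j}, M=m}, Y)]$. Assumption~\ref{ass_score_noproj_j0} then rewrites the right-hand side as the score attained by imputing with $P^*_{X_j \mid o(X,m)_{-j}}$, that is, by $P^*$ itself. Integrating against the outer law $P_{o(X,M)_{-j} \mid M} \times \Prob_{M \mid M \in L_j}$ gives $S^{j,1}_{\textrm{\tiny NA}}(H, P) \le S^{j,1}_{\textrm{\tiny NA}}(P^*, P)$ for each $j$, and averaging over $\mathcal{S}$ yields $S^1_{\textrm{\tiny NA}}(H, P) \le S^1_{\textrm{\tiny NA}}(P^*, P)$, which is propriety in the sense of Definition~\ref{Iscoredef}.

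For the ``only if'' direction I would argue by contraposition. If Assumption~\ref{ass_score_noproj_j0} fails, there is an index $j_0$, a pattern $m_0 \in L_{j_0}$, and a set of covariate values $o(X, m_0)_{-j_0}$ of positive probability on which $P^*_{X_{j_0} \mid o(X,m_0)_{-j_0}} \ne P^*_{X_{j_0} \mid o(X,m_0)_{-j_0}, M=m_0}$. I would then exhibit an imputation $H^\dagger \in \mathcal{H}_P$ that agrees with $P^*$ except that, when $X_{j_0}$ is the masked variable, it imputes from the test law $P^*_{X_{j_0} \mid o(X,m_0)_{-j_0}, M=m_0}$; by \emph{strict} propriety this strictly increases the $j_0$-term while leaving the remaining summands unchanged, so $S^1_{\textrm{\tiny NA}}(H^\dagger, P) > S^1_{\textrm{\tiny NA}}(P^*, P)$ and $S^1_{\textrm{\tiny NA}}$ is not proper.

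The step I expect to be delicate is exactly this construction of $H^\dagger$. I must verify that the distribution matching the pattern-specific test law is realizable by a genuine element of $\mathcal{H}_P$, i.e.\ that it is compatible with the fixed observed marginals $p(o(x,m) \mid M=m)$, and that altering $P^*$'s imputation of $X_{j_0}$ does not perturb the conditioning laws entering the summands $j \ne j_0$, so that the single strict gain survives the averaging. Making rigorous the informal identification ``what $P^*$ feeds into the score equals $P^*_{X_j \mid o(X,m)_{-j}}$'' — cleanly separating the role of $M$ as a selector of test points from its role in the imputation rule — is the conceptual heart of both directions and is where the reliance on \ref{PMMMAR} through Proposition~\ref{amazingprop} must be made airtight.
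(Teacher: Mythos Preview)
Your proposal is correct and follows essentially the same route as the paper: apply (strict) propriety of the energy score pointwise in $(o(X,m)_{-j},m)$ to identify the unique maximizer as the test law $P^*_{X_j\mid o(X,m)_{-j},M=m}$, and then observe that Assumption~\ref{ass_score_noproj_j0} is exactly the statement that this coincides with what $P^*$ supplies. The paper's own proof is in fact terser than yours and does not address the two issues you flag as delicate---realizability of $H^\dagger$ in $\mathcal{H}_P$ and non-interference with the other summands---so your caution there is well placed but goes beyond what the paper verifies.
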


\begin{proof}
Take $j \in \mathcal{S}$ arbitrary. We first note that by propriety of the energy score, $S^{j,1}_{\textrm{\tiny NA}}( H, P )$ is maximal for the imputation $H^*$ that has $H^*_{X_j \mid o^c(X,m)_{-j}, M=m}=P^*_{X_j \mid o^c(X,m)_{-j}, M=m}$, for all $m \in L_j$. By Assumption \ref{ass_score_noproj_j0}, we have $H^*_{X_j \mid o^c(X,m)_{-j}, M=m}=P^*_{X_j \mid o^c(X,m)_{-j}}$ for all $m \in L_j$, showing propriety of $S_{\textrm{\tiny NA}}^1$.

Assume now that for a $j \in \mathcal{S}$, Assumption \ref{ass_score_noproj_j0} does not hold. Thus, there exists $m \in L_j$, such that $P^*_{X_j \mid o^c(X,m)_{-j}, M=m} \neq P^*_{X_j, \mid o^c(X,m)_{-j}}$ and $H^*$ reaches a higher score by the strict propriety of the energy distance used in $S^{j,1}_{\textrm{\tiny NA}}$.
\end{proof}

If Assumption \ref{ass_score_noproj_j0} holds, $S_{\textrm{\tiny NA}}^1$ is proper. The set $\mathcal{S}$ of $X_j$ that are scored can in principle be $\{1,\ldots,d\}$, in which case $S_{\textrm{\tiny NA}}^1$ would be a \emph{strictly proper score}. However, the larger $\mathcal{S}$, the stronger Assumption \ref{ass_score_noproj_j0} becomes. On the other hand, if the condition does not hold, there are examples for which $S_{\textrm{\tiny NA}}^1$ is not proper, as was already discussed in Section \ref{Sec_Scoringdifficulty}. For completeness, we consider again the uniform example both under dependence (used in Section \ref{Sec_Uniformwithdependence}) and independence (Example \ref{Example1}). As, 
    \begin{align*}
\mathcal{M}= \{ m_1, m_2, m_3 \}=\left\{\begin{pmatrix} 0 & 0 & 0\end{pmatrix}, \begin{pmatrix}0 & 1 & 0\end{pmatrix}, \begin{pmatrix} 1 & 0 & 0 \end{pmatrix} \right \},
\end{align*}
with
    \begin{align*}
    \Prob(M=m_1 \mid x) &= \Prob(M=m_1 \mid x_1) = x_1/3\\
    \Prob(M=m_2 \mid x) &= \Prob(M=m_2 \mid x_1) = 2/3-x_1/3\\
    \Prob(M=m_3 \mid x) &= \Prob(M=m_3) = 1/3.
\end{align*}
In this case, for $m_2 \in L_1$
$$
p^*(x_1 \mid x_3, M=m_2)=p^*(x_1 \mid x_3)\frac{2- x_1}{\int (2- x_1) p^*(x_1 \mid x_3)dx_1},
$$
and thus Assumption \ref{ass_score_noproj_j0} does not hold for $j=1$, both under independence and independence. This shows that the score might not be proper, even if the energy-I-Score with projections is. However, we note that in these counterexamples an imputation that uses pattern information, namely, $P^*_{o^c(X,M), X_j \mid o(X,M), M}$, has to be used to achieve higher scoring. If only imputation methods are compared, that do not consider additional information about patterns, such as FCS imputation, the score might still reliably find the best imputation method in practice. 

We now consider a practical estimation of the score: 
\begin{itemize}
    \item[1.] Randomly choose a test set, $\mathcal{T} \subset \{i: m_{i, \cdot} \in L_j\}$.
    \item[2.] Impute the training set $\mathcal{T}^c$.
    \item[3.] Randomly draw a pattern $M$ from the patterns in the test set ($M \in L_j$). Let $\mathcal{T}_{M} \subset \mathcal{T}$, $\mathcal{T}_{M}=\{i\in \mathcal{T}: m_{i,\cdot}=M\}$.
        \item[4.] Create a new data set by concatenating the observed $(x_{i,j}, (x_{i,\cdot},M)_{-j})$, $i \in \mathcal{T}_{M}$, and the imputed $(x_{i,j}, (x_{i,\cdot},M)_{-j})$, $i \in \mathcal{T}^c$, as in Figure \ref{fig:scoreillustrationXM}, and set the \emph{observed} observations of $X_{j}$ to missing, i.e. $x_{i,j}=\texttt{NA}$ for $i \in \mathcal{T}_{M}$.
    \item[5.] Approximate the sampling from $H_{X_j \mid o(X,M)_{-j}, M}$, by imputing the new dataset, $N$ times, leading to samples $X_{i,j}^{(l)}$, $l=1, \ldots, N$, for all $x_{i,j}$, $i \in \mathcal{T}_{M}$.
    \item[6.] Calculate 
    \[
    S^M_{j}( H, P )=\frac{1}{|\mathcal{T}_{M}|} \sum_{i \in \mathcal{T}_{M}}  \left(\frac{1}{2N^2} \sum_{l=1}^N \sum_{\ell=1}^N | \tilde{X}_{i,j}^{(l)} - \tilde{X}_{i,j}^{(\ell)}    | -  \frac{1}{N} \sum_{l=1}^N | \tilde{X}_{i,j}^{(l)} - x_{i,j} |   \right).
    \]
\end{itemize}
Repeating steps 3. -- 6. several times, we then calculate 
\begin{align}\label{newscore_noprojections}
    &\widehat{S}^{j,1}_{\textrm{\tiny NA}}( H, P )= \E_{M \sim \Prob_M}[S^M_{j}( H, P ) ].
\end{align}
Finally
\begin{align*}
    \widehat{S}_{\textrm{\tiny NA}}^1(H, P) = \frac{1}{|\mathcal{S}|} \sum_{ j \in \mathcal{S}} \widehat{S}_{\textrm{\tiny NA}}^{j,1}(H, P).
\end{align*}
We refer to $\widehat{S}_{\textrm{\tiny NA}}^1$ as energy-I-Score*.

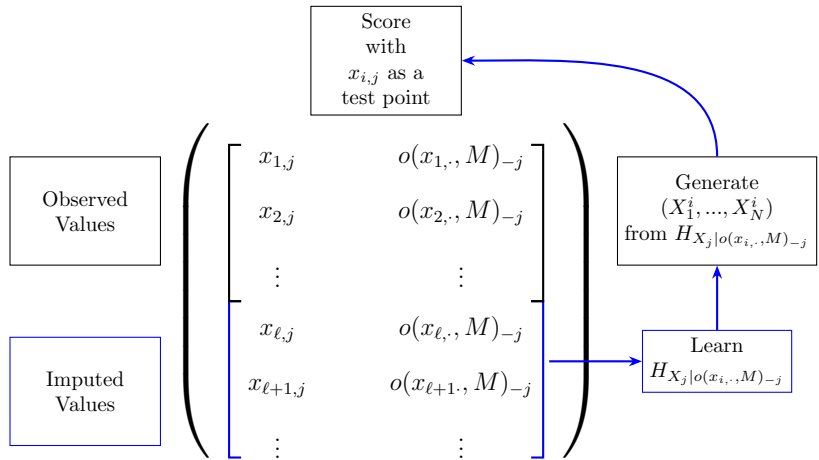
\begin{figure}
    \centering

    \begin{tikzpicture}[scale=0.8, every node/.style={scale=0.8},
    box/.style={draw, rectangle, minimum width=2.5cm, minimum height=1.8cm, align=center},
    smallbox/.style={draw, rectangle, minimum width=2cm, minimum height=1cm, align=center},
    bluebox/.style={draw=blue, rectangle, minimum width=2cm, minimum height=1cm, align=center},
    arrow/.style={-{Stealth[length=5pt]}, blue, thick}
]

\node[box] (observed) at (-1,3) {Observed\\Values};
\node[box, draw=blue] (imputed) at (-1,0) {Imputed\\Values};

\matrix (data) [matrix of math nodes, row sep=0.2cm, column sep=0.8cm, nodes={minimum width=1.5cm}, font=\large] at (4,1.5) {
  x_{1,j} & o(x_{1, \cdot}, M)_{-j} \\
  x_{2,j} & o(x_{2,\cdot}, M)_{-j} \\
  \vdots & \vdots \\
  x_{\ell,j} & o(x_{\ell,\cdot}, M)_{-j} \\
  x_{\ell+1,j} & o(x_{\ell+1\cdot}, M)_{-j} \\
  \vdots & \vdots \\
};

\node at ($(data.north west)+(-0.5,-2.7)$) {\scalebox{2.2}{$\left(\vphantom{\begin{matrix}x_{1,j}\\x_{2,j}\\\vdots\\x_{\ell+1,j}\\\vdots\end{matrix}}\right.$}};
\node at ($(data.north east)+(0.5,-2.7)$) {\scalebox{2.2}{$\left.\vphantom{\begin{matrix}x_{1,j}\\x_{2,j}\\\vdots\\x_{\ell+1,j}\\\vdots\end{matrix}}\right)$}};

\node[box] (score) at (4,5.5) {Score\\with\\$x_{i,j}$ as a\\test point};
\node[box] (generate) at (9.5,3) {Generate\\$(X_1^{i},...,X_N^{i})$\\from $H_{X_j \mid o(x_{i,\cdot}, M)_{-j}}$};
\node[bluebox] (learn) at (9.5,0.5) {Learn\\$H_{X_j \mid o(x_{i,\cdot}, M)_{-j}}$};

\draw[arrow] ($(data.east)+(0,-1)$) to[out=0,in=180] (learn);
\draw[arrow] (learn) -- (generate);
\draw[arrow] (generate) to[out=90,in=0] (score);

\draw[blue, thick] ($(data.west)-(-0.3,0)$)-- ++(-0.2,0) -- ++(0,-2.6) -- ++(0.2,0);
\draw[blue, thick] ($(data.east)+(-0.3,0)$)-- ++(0.2,0) -- ++(0,-2.6) -- ++(-0.2,0);

\draw[black, thick] ($(data.west)-(-0.3,-2.6)$)-- ++(-0.2,0) -- ++(0,-2.6) -- ++(0.2,0);
\draw[black, thick] ($(data.east)+(-0.3,2.6)$)-- ++(0.2,0) -- ++(0,-2.6) -- ++(-0.2,0);

\end{tikzpicture}

    \caption{Conceptual illustration of the score approximation for a random draw of a pattern $M \in L_j$, similarly to Figure \ref{fig:scoreillustrationOj}. Contrary to the energy-I-Score with projections we now have a training set in blue (including observations $m_{i, \cdot} \in L_j$ and $m_{i, \cdot} \in L_j^c$) that was fully imputed and a test set in black of pattern $M \in L_j$.}
    \label{fig:scoreillustrationXM}
\end{figure}

In the following, we repeat all previous examples, with the new score added. Remarkably, the score has similar performance as the energy-I-Score with projections, while also being strictly proper in the example of Section \ref{Sec_StrictProprietyCounterexample}. 

\begin{figure}
    \centering
    \includegraphics[width=1\linewidth]{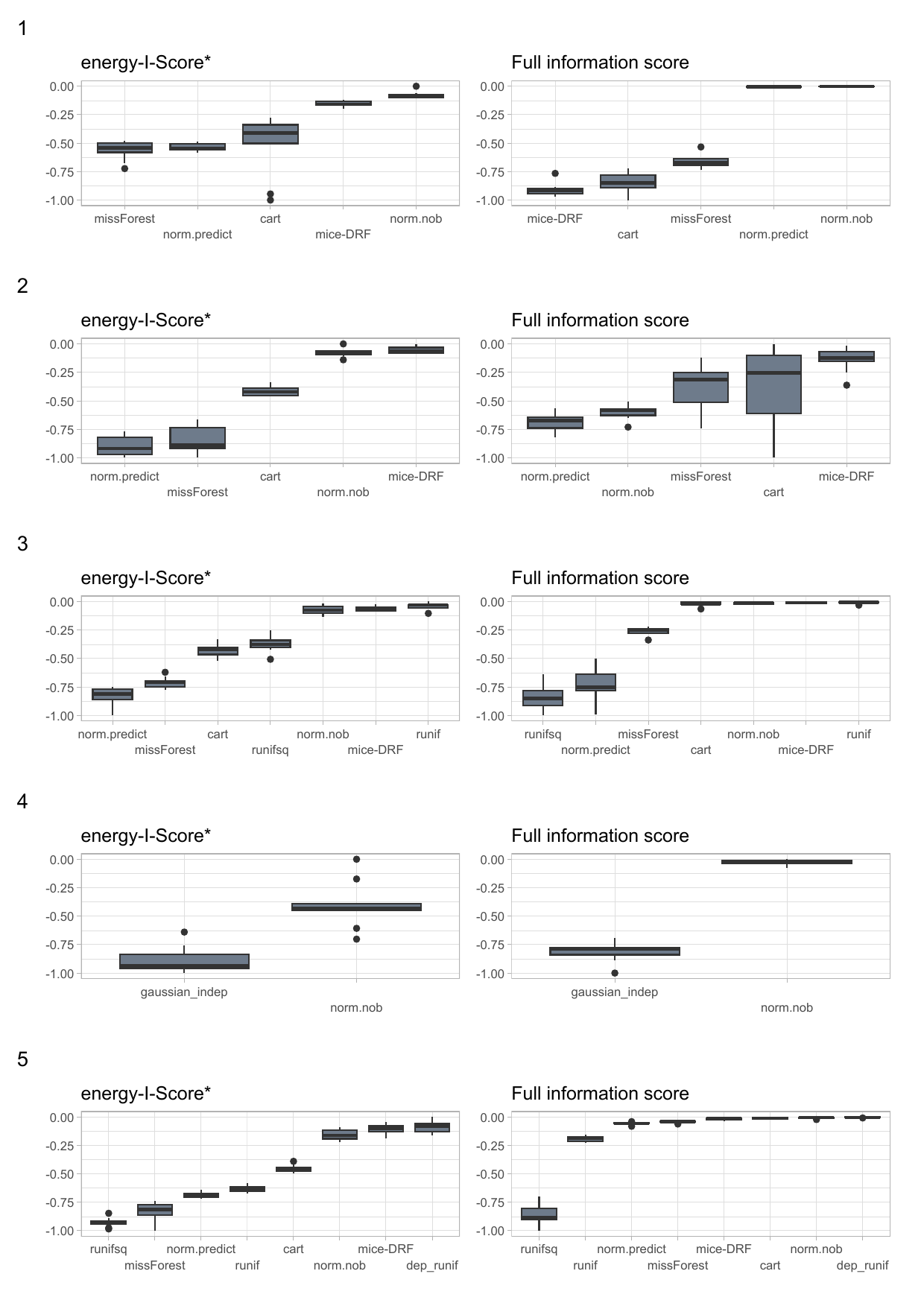} 
    \caption{(1) Gaussian Mixture Example of Section \ref{Sec_Gaussmixmodel}, (2) Nonlinear Mixture Example of Section \ref{Sec_Gaussmixmodelnonlinar}, (3) Independent Uniform Example of Section \ref{Sec_PMMMARExample}, (4) Non-Strict Propriety Example of Section \ref{Sec_StrictProprietyCounterexample} and (5) Dependent Uniform Example of Section \ref{Sec_Uniformwithdependence} with the new score.}
    \label{fig:NonlinearMixNewscore}
\end{figure}





\clearpage

\bibliographystyle{apalike}
\bibliography{biblio}

\end{document}